\newtheorem{proposition}{Proposition}
\newtheorem{definition}{Definition}
\newtheorem{prop-proof}{Proof of Proposition}
\newtheorem{thm-proof}{Proof of Theorem}
\let\NAT@parse\undefined
\def\Ae{\ensuremath{A_\mathrm{e}}}
\def\matlab{MATLAB$^\copyright$}
\def\kmax{\ensuremath{k_\mathrm{max}}}
\def\algonamefull{Simultaneous Plant and Disturbance Identification
	through Regularization}
\def\algoname{SPDIR}
\def\dist{\ensuremath{q_\mathrm{int}}}
\def\qhvac{\ensuremath{q_\mathrm{hvac}}}
\def\solirr{\ensuremath{\eta^\mathrm{sol}}}
\def\mbf{\mathbf}
\def\eqdef{\ensuremath{:=}}
\def\R{\mathbb{R}}
\newcommand{\Lagr}{\mathcal{L}}
\newcommand{\G}{\mathcal{G}}
\def\zinv{\ensuremath{z^{-1}}}
\def\zinvv{\ensuremath{z^{-2}}}
\def\pb#1{\footnote{{\color{blue}{PB: #1}} }}
\def\version{Arxiv}% Jrnl or Arxiv
\NewDocumentCommand{\setupcollaborator}{mm}
{% #1 = identifier string, #2 = set of key-value pairs
	\prop_new:c { g_collaborator_#1_prop }
	\pb_prop_gset_bykeys:cn { g_collaborator_#1_prop } { #2 }
}
\NewDocumentCommand{\selectcollaborator}{m}
{
	\prop_map_inline:cn { g_collaborator_#1_prop }
	{
		\tl_set:cn { ##1 } { ##2 }
	}
}
\begin{document}
	
	\begin{frontmatter}
		
		\title{Simultaneous identification of linear building dynamic model and disturbance using sparsity-promoting optimization\thanksref{footnoteinfo}} % Title, preferably not more than 10 words.
		
		\thanks[footnoteinfo]{This research is partially supported by NSF grants 1463316 and 1646229.}
		\author{Tingting Zeng}\ead{tingtingzeng@ufl.edu},
		\author{Jonathan Brooks}\ead{brooks666@ufl.edu},
		\author{Prabir Barooah$^*$}\ead{pbarooah@ufl.edu}
		
		\address{Mechanical and Aerospace Engineering, University of Florida,\\ Gainesville, Florida USA\\
			\vspace{10pt}	*Corresponding Author}
		
		\begin{keyword} % Five to ten keywords, chosen from the IFAC keyword list or with the help of the Automatica keyword wizard
			System identification; $\ell_1$-regularization; Sparsity; Disturbance estimation; Smart building; Thermal modeling.  
		\end{keyword}

		\begin{abstract}    % Abstract of not more than 200 words.
		We propose a method that simultaneously identifies a linear time-invariant model of a building's temperature dynamics and a transformed version of the unmeasured disturbance affecting the building. Our method uses $\ell_1$-regularization to encourage the identified disturbance to be approximately sparse, which is motivated by the slowly-varying nature of occupancy that determines the disturbance. The proposed method involves solving a convex optimization problem that guarantees the identified black-box model possesses known properties of the plant, especially input-output stability and positive DC gains. These features enable one to use the method as part of a self-learning control system in which the model of the building is updated periodically without requiring human intervention. Results from the application of the  method on data from a simulated and real building are provided. %Results from simulation data show that the proposed method can accurately identify the transfer functions even in the presence of large disturbances, and even when the disturbance does not satisfy the piecewise-constant property. Results from real building data show that algorithm produces sensible results.
		\end{abstract}
		
	\end{frontmatter}
	
	\section{Introduction}\label{sec:intro}
	A dynamic model of a building's temperature is useful for model-based fault detection and control of its HVAC (Heating Ventilation and Air Conditioning) system. There is a long history of such modeling efforts~\cite{Kramer:2012}. Due to the complexity of thermal dynamics, system identification from data is considered advantageous and there has been much work on it; see \cite{Kramer:2012,penman1990second,hu2016building} and references therein. A particular challenge for model identification is that temperature is affected by large, unknown disturbances, especially the cooling load induced by the occupants. The occupant-induced load refers to the heat gain directly due to the occupants' body heat and indirectly from lights and other equipments they use.  Another challenge comes from the need for automatic updates, especially for the use in model-based control. Due to changes in a building's properties over time, the model needs to be updated periodically with new data. 
	
	An identification method should also guarantee certain properties of the model so that it can be used as part of a self-learning control system without the need for a human expert to check the quality of the model. Most system identification methods for buildings ignore the unknown disturbances, but doing so can produce erroneous results. Only a few recent works have addressed the problem of unknown disturbances~\cite{kimcaiaribra:2016,coffman2018simultaneous}. None of the prior works however provide any guarantees on the properties of the identified model, such as stability. 
	
	In this paper we propose a method to estimate a linear dynamic model as well as a transformed version of the unknown disturbances from easily measurable input-output data. The method consists of solving a feasible and convex optimization problem, and the resulting model is guaranteed to possess properties that are known from physical insight into thermal dynamics of buildings, such as stability and positive DC gains of certain input output pairs. The proposed method, which we call \algoname\ (\algonamefull), is based on solving a constrained $\ell_1$-regularized least-squares problem. The $\ell_1$-penalty encourages the transformed disturbance to be a sparse signal. Use of the $\ell_1$-norm penalty to encourage sparse  solution is a widely used heuristic~\cite{tibshirani1996regression}. In our problem the motivation comes from the fact that the disturbance, which consists mostly of internal load due to occupants, is often slowly varying. For instance, large numbers of people enter and leave office buildings at approximately the same time. A slowly varying disturbance can be further approximated as piecewise-constant. We show that this feature makes the transformed disturbance an approximately sparse signal. The constraints ensure the identified model will have desirable properties. Evaluation of the method with simulation-generated data show that it can accurately identify the transfer function in the presence of large disturbances, even when the disturbance is not piecewise-constant. Evaluation with data from a real building are similarly promising, though accuracy is difficult to establish due to lack of a ground truth.

    A few works have partially addressed the challenge posed by  the presence of the unknown disturbance by using a specialized test building to measure the occupant-induced load~\cite{penman1990second,wang2006parameter}, or by collecting data from unoccupied times and setting the occupant-induced disturbance during that time to  0~\cite{fux2014ekf,hu2016building}. Work on model identification of building dynamics that handles occupant-induced heat gains in a principled manner, without requiring specially collected data, is limited. To the best of our knowledge, the only references that fall into this category are~\cite{kimcaiaribra:2016,coffman2018simultaneous}. There are many differences between these references and our work. We point out two key differences. One, 
	in contrast to the methods in~\cite{kimcaiaribra:2016,coffman2018simultaneous}, the proposed \algoname\ method can enforce properties of the system that are known from the physics of the thermal processes, in particular, stability and signs of DC gains for certain input-output pairs. For instance, an increase in outdoor temperature will lead to an increase in indoor temperature, but none of the prior methods guarantees that the identified model will predict this behavior. Second,  while the proposed \algoname\ method requires solving a feasible convex optimization problem, the methods in \cite{kimcaiaribra:2016,coffman2018simultaneous} require solving non-convex optimization problems. These two features of the proposed method enable it to be used as part of a self-learning control system without the need for a human expert in the loop. 

	\ifArxivVersion
	The solution to an $\ell_1$-regularized least-square problem is tunned by a regularization parameter $\lambda$. Although results from convex analysis in~\cite{tibshirani1996regression,ohlsson2012smoothed} tell us that there is a threshold $\lambda_{max}$ above which specific entries of the solution to such problem are identically zero, those works only determine $\lambda_{max}$ for formulations with no constraints, whereas linear inequality constraints are involved in our study. Considering the critical parameter value $\lambda_{max}$ provides very good starting point in finding a suitable value of $\lambda$. In this paper we determine $\lambda_{max}$ for an $\ell_1$-regularized least-square problem with linear inequality constraints along with a heuristic to choose $\lambda$.	
	\fi
	
	The article makes three additional contributions over the preliminary version~\cite{ZengSimultaneousHPB:2018}: (1) we determine the value of the critical regularization parameter $\lambda_{max}$ that is used in tuning the regularization parameter $\lambda$ (Proposition~\ref{prop:lambda_max}); (2) we provide evaluation of our method on data from a real building, and (3) 
	\ifArxivVersion we compare performance of the proposed method against the methods in~\cite{kimcaiaribra:2016,box2015time}. 
	\else
	we compare performance of the proposed method against the method in~\cite{kimcaiaribra:2016}. 
	\fi

	\ifArxivVersion
	The rest of this paper is organized as follows. Section~\ref{sec:problem_formu} formally describes the problem and establishes a few preliminaries. Section \ref{sec:method} describes the proposed algorithm. Section~\ref{sec:evaluation} provides evaluation results and Section~\ref{sec:conclusion} concludes the paper. In Section~\ref{sec:appendix}, constraints that enforce physical properties of the system are derived, and additional comparisons of the proposed method against the methods in~\cite{kimcaiaribra:2016,box2015time} are provided.
	\else
	The rest of this paper is organized as follows. Section~\ref{sec:problem_formu} formally describes the problem and establishes a few preliminaries. Section \ref{sec:method} describes the proposed algorithm. Due to the directive to reduce the paper to a brief paper format, some proofs of the technical results presented in Sections~\ref{sec:problem_formu} and ~\ref{sec:method} are omitted; they can be found in~\cite{ZengSimultaneousArXiV:2017}. Section~\ref{sec:evaluation} provides evaluation results and Section~\ref{sec:conclusion} concludes the paper. 
	\fi
	
	\section{Problem Formulation}\label{sec:problem_formu}
	The indoor zone temperature $T_z$ is affected by three \emph{known}	inputs: (1) the heat added to the zone by the HVAC system, \qhvac (kW), (2) the outside air temperature $T_{oa}$ (\degree C), (3) the solar irradiance \solirr (kW/m$^2$), and the \emph{unknown} disturbance $\dist$ (kW), which is the internal heat gain due to occupants, lights, and equipments used by the occupants. The only measurable output is the zone temperature $T_z$(\degree C). 
	
	Let $u(t) \eqdef[\qhvac(t), T_{oa}(t), \solirr(t)]^T \in \R^3$, $w(t) \eqdef \dist(t) \in \R$, and $y(t) \eqdef T_z(t) \in \R$. We start with the following second-order discrete-time transfer function model of  the system, with a sampling period $t_s$:
	\begin{align}\label{eq:dtmodel_1}
	\begin{aligned}
	y(\zinv) = &
	\frac{1}{D(\zinv)}\Big[\sum_{j=1}^3[\sum_{i=0}^2\alpha_{ij}z^{-i}]u_j(\zinv)\\
	& \quad + [\sum_{i=0}^2\beta_{i}z^{-i}]w(\zinv)\Big],
	\end{aligned}
	\end{align}
	where $D(\zinv) = 1-\theta_1\zinv-\theta_2\zinvv$, for some
	parameters $\theta_1,\theta_2$ and $\alpha_{ij},\beta_i$'s, and
	$u[k],w[k],y[k]$ are samples of the continuous-time signals $u(t),w(t),y(t)$. This model is a discrete-time version of a physics-based continuous time model that is described in Section~\ref{sec:RCmdl_const}. For future convenience, we rewrite~\eqref{eq:dtmodel_1} as
	\begin{align}\label{eq:dtmodel}
	\begin{aligned}
	y(\zinv) = &
	\frac{1}{D(\zinv)}\Big[K(\zinv)^Tu(\zinv)  + \bar{w}(\zinv) \Big], \\
	 \text{where}\quad& K(\zinv) := \begin{bmatrix}
	\theta_3\zinvv+\theta_4\zinv+\theta_5 \\
	\theta_6\zinvv+\theta_7\zinv+\theta_8 \\
	\theta_9\zinvv+\theta_{10}\zinv+\theta_{11} 
	\end{bmatrix},
	\end{aligned}
	\end{align}
	and $\bar{w}(\zinv)$ is the Z-transform of the \emph{transformed disturbance} signal $\bar{w}[k]$ defined as
	\begin{align}\label{eq:def-wbar}
	\bar{w}[k]&:= \beta_0w[k] + \beta_1w[k-1] + \beta_2w[k-2].
	\end{align}
	An inverse Z-transform on~\eqref{eq:dtmodel} yields a difference equation, which leads to:
	\begin{align} \label{eq:regression}
	y[k] = \phi[k]^T\theta, \quad k=3,\dots,\kmax,
	\end{align}
	where $\kmax$ is the number of samples, and $\theta^T:=[\theta_p^T,\bar{w}^T]$, in which $\theta_p=[\theta_1,\dots,\theta_{11}]^T \in \R^{11}$, $\bar{w}=[\bar{w}_3,\dots,\bar{w}_{\kmax}]^T \in \R^{\kmax-2}$ and  
	\begin{multline*}%\label{eq:regressor}
	\phi[k]^T \eqdef 
	\Big[y[k-1],y[k-2],u_1[k-2],u_1[k-1],u_1[k],\\
	u_2[k-2],\dots,u_2[k],u_3[k-2],\dots,u_3[k], e_{k-2}^T\Big], 
	\end{multline*}
	where $e_k $ is the $k$-th canonical basis vector of $\R^{\kmax-2}$ in which the 1 appears in the $k^\text{th}$ place. Eq.~\eqref{eq:regression} can be expressed as:
	\begin{align}\label{eq:regression-vec}
	y = \Phi\theta, 
	\end{align}
	where $y \eqdef \left[y[3],\dots,y[\kmax]\right]^T \in \R^{\kmax-2}$ and
	\begin{align*}
	\begin{aligned}
	\Phi & \eqdef
	\begin{bmatrix}
	\phi[3]^T \\
	\dots\\
	\phi[\kmax]^T
	\end{bmatrix} \in \R^{\kmax-2 \times \kmax+9}.
	\end{aligned}
	\end{align*}
	The problem we seek to address is: \emph{given time traces of inputs and outputs, $\{u[k],y[k]\}_1^{\kmax}$, determine the unknown parameter vector $\theta_p \in \R^{11}$ and the unknown transformed disturbance vector $\bar{w}:=[\bar{w}_3,\dots,\bar{w}_{\kmax}]^T$, i.e., determine $\theta$}. 
	
	The matrix $\Phi$ is not full column-rank, so there will be an infinite number of solutions to \eqref{eq:regression-vec}. 
	\ifArxivVersion
	We also note that  $\Phi$ has the form
	\begin{align*}%\label{eq:Phi-Psi}
	\Phi = \begin{bmatrix} \Psi_{(k_{max}-2) \times11}, & I_{(k_{max}-2) \times (k_{max}-2)} \end{bmatrix} .
	\end{align*}
	Since the number of samples is typically large, $\Psi$ is a tall matrix. Due to the dependency of $\Psi$ on (noisy) measurements of inputs and outputs, $\Psi$ is full column-rank except in case of degenerate data.
	\fi 
	We therefore will use physical insights to impose additional constraints on $\theta$ for the rest of this section.	
	
	\subsection{Parameter constraints from physical insights}\label{sec:para_const}
	The constraints described below are straightforward to derive, but involve - in a few cases - extensive algebra.
	\ifArxivVersion
	We therefore provide the proof in the Appendix.	
	\else
	We therefore omit the details here; they can be found in the expanded version~\cite{ZengSimultaneousArXiV:2017}.
	\fi
	
	\textbf{\textit{Stability}}
    The open loop dynamics of a building are bounded input bounded output (BIBO) stable; it will be a strange building indeed in which the temperature can become unbounded in response to bounded changes in the inputs. BIBO stability of the discrete-time model~\eqref{eq:dtmodel_1} is equivalent to:
    \ifArxivVersion
    \begin{align}
    	\begin{aligned}\label{eq:stable_const}
    	-\theta_2& < 1, 
    	\end{aligned}\\
    	\begin{aligned}\label{eq:stable_const2}
    	 \theta_2+\theta_1&< 1,
    	\end{aligned}\\    	
    	\begin{aligned}\label{eq:stable_const_redun}
    	\theta_2-\theta_1& < 1.
    	\end{aligned}
    	\end{align}
    \else
	\begin{align}\label{eq:stable_const}
	-\theta_2 < 1, \quad 
	\theta_2+\theta_1< 1,\quad	
	\theta_2-\theta_1 < 1.
	\end{align}
	\fi
	\textbf{\textit{Positive DC-gain}}
    In case of a real building, a steady state increase in the outdoor temperature will lead to a steady state increase in the indoor temperature, and the same pattern holds for each of the three inputs $\qhvac, T_{oa}, \eta^{sol}$. In other words, the corresponding DC gains must be positive. It can be shown that positive DC gains are equivalent to:
     \ifArxivVersion
	\begin{align}
	\begin{aligned}\label{eq:dc_num_const_1st}
	\theta_3+\theta_4+\theta_5&> 0,
	\end{aligned}\\
	\begin{aligned}\label{eq:dc_num_const_redun}
	\theta_6+\theta_7+\theta_8&> 0,
	\end{aligned}\\
	\begin{aligned}\label{eq:DCgain_const}
	\theta_9+\theta_{10}+\theta_{11}&> 0.
	\end{aligned}
	\end{align}
	\else        
	\begin{align}\label{eq:DCgain_const}
		\theta_{i}+\theta_{i+1}+\theta_{i+2}>0, \quad i \in \{3,6,9\}.
	\end{align}
	\fi
	
	\subsubsection{Physical insights from an RC network ODE model}\label{sec:RCmdl_const}
	\ifArxivVersion 
	Additional constraints can be imposed on $\theta$ if we use insights from a physics-based model. The physics-based model we use is a resistance-capacitance (RC) network model. An RC network is a common paradigm for modeling building thermal dynamics~\cite{Kramer:2012,fux2014ekf}. We will later assume that the discrete-time transfer function model~\eqref{eq:dtmodel_1} is obtained by discretizing a continuous-time RC network model, which helps us impose constraints on $\theta$.
	
	\begin{figure}[h]
		\includegraphics[width=0.95\linewidth]{\figPATH/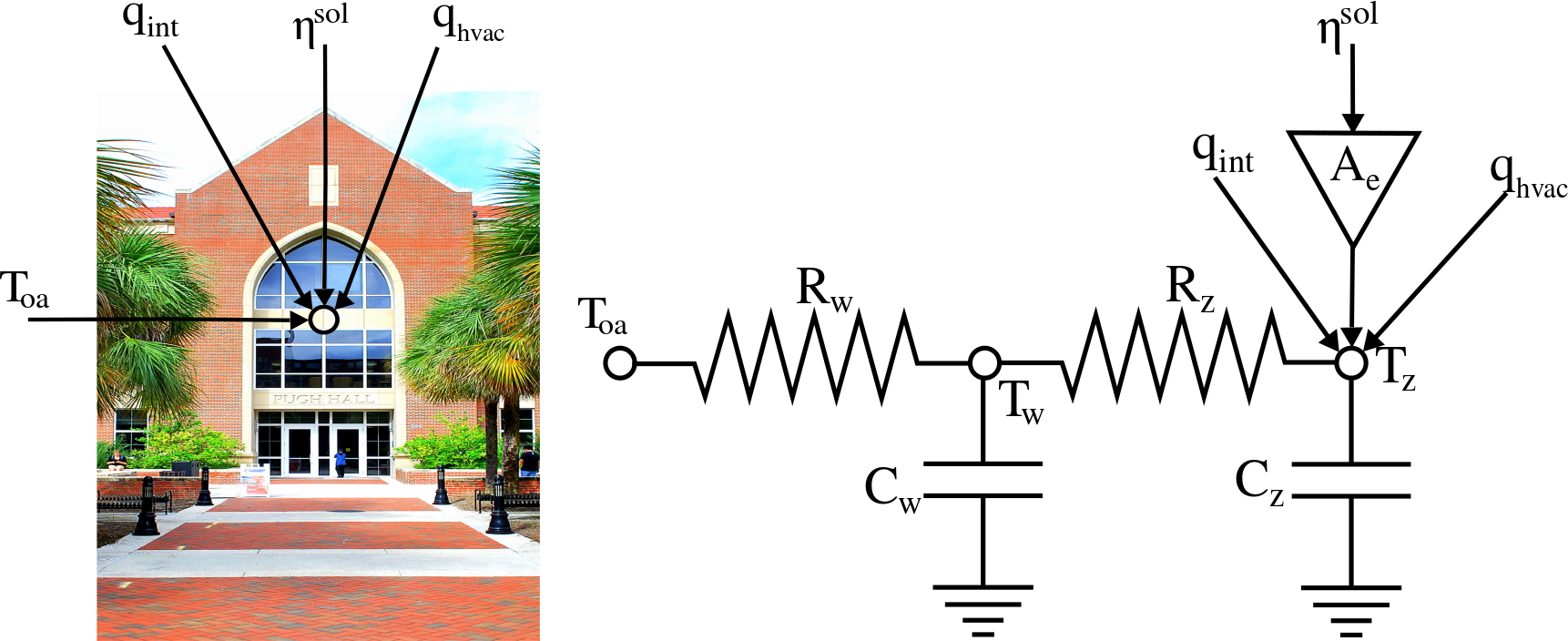}
		\centering
		\caption{A photograph of Pugh Hall and a schematic of the ``2R2C'' model.}
		\label{fig:RCschematic}
	\end{figure}
	Figure~\ref{fig:RCschematic} shows a building (left) and a corresponding 2nd-order resistance-capacitance (RC) network model (right). The ODE model of the RC-network model shown in the figure is:
	\else
	RC networks are widely used gray-box models for buildings~\cite{Kramer:2012,fux2014ekf}. Additional constraints can be imposed if we assume that the discrete-time transfer function model~\eqref{eq:dtmodel_1} is obtained by discretizing the following continuous-time resistance-capacitance (RC) network model: 
	\fi
	\begin{align} \label{eq:RC}
	\begin{aligned}
	C_z\dot{T}_z &= \frac{T_w-T_z}{R_z}+\qhvac+\Ae\solirr+\dist,\\ 
	C_w\dot{T}_w &= \frac{T_{oa}-T_w}{R_w}+\frac{T_z-T_w}{R_z},
	\end{aligned}
	\end{align}
	where $C_z, C_w, R_z, R_w$ are the thermal capacitances and resistances of the zone and wall, respectively, and \Ae\ is the effective area of the building for incident solar radiation. All five parameters are positive. Defining the state vector as $x \eqdef [T_z,
	T_w]^T \in \R^2$,  \eqref{eq:RC} can be written as
	\begin{align}\label{eq:ct_ss}
	\dot{x}&=Fx+Gu+Hw, & y &= Jx,
	\end{align}
	where $u$, $w$, and $y$ are defined at the beginning of Section~\ref{sec:problem_formu}, and $F\in \R^{2\times 2}, G \in \R^{2 \times 3}$, $H \in \R^{2
		\times 1}$ and $J \in \R^{1 \times 2}$ are appropriate matrices
	that are functions of the parameters $C_z, C_w, R_z, R_w,\Ae$. 
	\ifArxivVersion	
	Specifically,
	\begin{align*}
	F & = 
	\begin{bmatrix}
	-\frac{1}{C_zR_z} &      \frac{1}{C_zR_z} \\
	\frac{1}{C_wR_z} &     -\frac{1}{C_w}(\frac{1}{R_w}+\frac{1}{R_z})
	\end{bmatrix},\\
	G & = 
	\begin{bmatrix}
	\frac{1}{C_z}   &           0             & \frac{A_e}{C_z} \\
	0  &     \frac{1}{C_wR_w}    &     0
	\end{bmatrix}, \\
	H & = 
	\begin{bmatrix}
	\frac{1}{C_z}  \\  0 
	\end{bmatrix},\\
	J &=  
	\begin{bmatrix}
	1 & 0
	\end{bmatrix}.
	\end{align*} 
	\fi
	In Laplace domain,
	\begin{align} \label{eq:ct_tf}
	\begin{aligned}
	y(s) & = \frac{1}{D(s)} \Big[(s-f_{22}) \left(
	g_{11}u_1(s)+g_{13}u_3(s) \right)  \\
	& +f_{12}g_{22}u_2(s) + (s-f_{22})h_{11}w(s) \Big],
	\end{aligned}
	\end{align}
	where $f_{ij},g_{ij},h_{ij}$'s are the $i,j$-th entry of the matrices
	$F,G,H$ (respectively) in \eqref{eq:ct_ss}, and 
	\begin{align} 
	D(s) & = s^2 + d_1s + d_2, \; \text{with} \label{eq:def-D(s)}\\
	d_1 &=
	\frac{1}{C_zR_z}+\frac{1}{C_w}(\frac{1}{R_z}+\frac{1}{R_w}), &
	d_2 &= \frac{1}{C_zC_wR_zR_w}. \notag
	\end{align}
	\emph{We now assume that the discrete-time system~\eqref{eq:dtmodel_1} was obtained by discretizing the continuous-time system \eqref{eq:ct_tf} using Tustin transform.} It can be shown through straightforward calculations that the parameters of the discrete-time model -- the $\theta_i$'s -- are related
	to those of the continuous-time model \eqref{eq:ct_tf} as follows:
	\begin{align}
	\begin{aligned}
	&\theta_1 :=\frac{8-2d_2t_s^2}{D_0}, \;
	\theta_2 := -\frac{d_2t_s^2-2d_1t_s+4}{D_0}, \\
	&\begin{bmatrix}
	\theta_3 & \theta_9\\
	\theta_4 & \theta_{10}\\
	\theta_5 & \theta_{11}\\
	\end{bmatrix}  := \frac{t_s}{D_0}
	\begin{bmatrix}
	-2-f_{22}t_s \\-2f_{22}t_s \\ 2-f_{22}t_s
	\end{bmatrix}
	\begin{bmatrix}
	g_{11} & g_{13}
	\end{bmatrix}, \\
	& \begin{bmatrix}
	\theta_6 \\ \theta_7 \\ \theta_8
	\end{bmatrix}  := 
	\begin{bmatrix}
	1 \\ 2 \\ 1
	\end{bmatrix}
	\frac{f_{12}g_{22}t_s^2}{D_0},
	\end{aligned}\label{eq:theta_def}
	\end{align} 
	where $D_0=d_2t_s^2+2d_1t_s+4$. Similarly,
	\begin{align}
	[\beta_0,\beta_1,\beta_2] = \frac{t_s\big[(2+\epsilon_0), \;
		2\epsilon_0, \; (-2+\epsilon_0) \big]}{C_zD_0},   \label{eq:def-betas}\\
	\text{ where } \epsilon_0 = -f_{22}t_s = 
	\frac{t_s}{C_w}(\frac{1}{R_w}+\frac{1}{R_z}).   \label{eq:def-eps0}
	\end{align}	
	
	\textbf{\textit{Sign of parameters}}
	By using the positivity of the parameters $R_w,R_z,C_w,C_z,A_e$, the following holds:
	\begin{align}\label{eq:thetasigns}
	\begin{aligned}
	&\theta_i>0, \quad i \in \{1,4,5,6,7,8,10,11\}, \\
	&\theta_2<0, \quad \theta_3<0, \quad \theta_9<0,
	\end{aligned}
	\end{align}
	\ifArxivVersion
	of which the proof is provided in the Appendix.		
	\else
	whose proof is provided in~\cite{ZengSimultaneousArXiV:2017}.
	\fi
	
	\textbf{\textit{Sparse disturbance}}
	We need a few definitions to talk about \emph{approximately sparse}	vectors, and \emph{infrequently changing} vectors. 
	\begin{definition}
		\begin{enumerate}
		\item A vector $x\in \R^{n}$ is $(\epsilon,f)$-sparse if at most $f$ fraction of entries of $x$ are \emph{not} in $[-\epsilon,\epsilon]$.
		\item The \emph{change frequency} $c_f(x)$ of a vector $x\in\R^n$ is the fraction of entries that are distinct from their previous neighbor: $c_f(x) = \frac{1}{n-1}|\{k>1|x_k \neq x_{k-1}\}|$, where $|A|$ denotes the cardinality of the set $A$. We say a vector $x$ \emph{changes infrequently} if $c_f(x)\ll 1$.
		\end{enumerate}
	\end{definition}

	The following proposition shows that if the disturbance is slowly varying (e.g., if it is piecewise-constant), the transformed disturbance is approximately sparse.
	\begin{proposition}\label{prop:sparse}
		Suppose the disturbance $w[k]$ is uniformly bounded $|w[k]|\leq w_b$ in $k$, it changes infrequently with change frequency $c_f(\omega)$, and $\epsilon_0\ll 1$ where $\epsilon_0$ is defined in \eqref{eq:def-eps0}.  Then, $\bar{w}[k]$ is $(\bar{\epsilon},2c_f(w))$-sparse, where
		$\bar{\epsilon}=\frac{4}{C_zD_0}t_s w_b\epsilon_0$.
	\end{proposition}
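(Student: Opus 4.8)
The plan is to partition the indices $k\in\{3,\dots,\kmax\}$ according to whether the disturbance is locally constant on the three-sample window $\{w[k-2],w[k-1],w[k]\}$ that defines $\bar{w}[k]$ in \eqref{eq:def-wbar}. On windows where $w$ does not change I will bound $|\bar{w}[k]|$ by $\bar{\epsilon}$ directly, so those coordinates lie inside $[-\bar{\epsilon},\bar{\epsilon}]$; the remaining coordinates will be charged to the changes of $w$ and counted through $c_f(w)$.

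The algebraic heart of the argument is a cancellation in the coefficients. From \eqref{eq:def-betas}, $\beta_0+\beta_1+\beta_2=\frac{t_s}{C_zD_0}\big[(2+\epsilon_0)+2\epsilon_0+(-2+\epsilon_0)\big]=\frac{4t_s\epsilon_0}{C_zD_0}$; the $\pm2$ contributions cancel and only the $O(\epsilon_0)$ part survives. Hence if $w[k]=w[k-1]=w[k-2]=c$, then $\bar{w}[k]=(\beta_0+\beta_1+\beta_2)c$, and since $|c|\le w_u$ we obtain $|\bar{w}[k]|\le\frac{4t_s\epsilon_0}{C_zD_0}w_u=\bar{\epsilon}$, exactly the claimed threshold. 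It is here that the hypothesis $\epsilon_0\ll1$ (via \eqref{eq:def-eps0}) makes $\bar{\epsilon}$ genuinely small, so that the conclusion ``approximately sparse'' is non-vacuous; the per-window bound itself holds for any $\epsilon_0$.

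Next I count the coordinates that can fall outside $[-\bar{\epsilon},\bar{\epsilon}]$. By the previous step, $|\bar{w}[k]|>\bar{\epsilon}$ forces the window to be non-constant, i.e. $w[k]\neq w[k-1]$ or $w[k-1]\neq w[k-2]$, which is precisely a change of $w$ at position $k$ or at position $k-1$. Conversely, a single change at position $j$ (where $w[j]\neq w[j-1]$) renders non-constant only the two windows of $\bar{w}[j]$ and $\bar{w}[j+1]$, and no others; in particular it does not enter the window of $\bar{w}[j+2]$. Therefore the number of ``large'' coordinates is at most twice the number of change positions, which by the definition of $c_f$ equals $2\,c_f(w)(n-1)$ with $n$ the length of $w$. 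Dividing by the length $\kmax-2=n-2$ of $\bar{w}$ gives a fraction of large coordinates at most $2c_f(w)(n-1)/(n-2)$, which equals $2c_f(w)$ up to the boundary factor $(n-1)/(n-2)\to1$; since $c_f(w)\ll1$ this is the asserted $(\bar{\epsilon},2c_f(w))$-sparsity.

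The step needing the most care is the counting bookkeeping: verifying that each change of $w$ contaminates at most two coordinates of $\bar{w}$ (this ``at most two'' is exactly why the factor is $2$ and not $3$, even though each window spans three samples), and matching the number of change positions $(n-1)$ against the length $(n-2)$ of $\bar{w}$ so that the final fraction is $2c_f(w)$ rather than a larger multiple. I do not expect the analytic estimates to be the obstacle: the coefficient identity and the constant-window bound are immediate once the cancellation $\beta_0+\beta_1+\beta_2=4t_s\epsilon_0/(C_zD_0)$ is observed.
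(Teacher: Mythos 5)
Your proof is correct and follows essentially the same route as the paper's: both rest on the cancellation of the $\pm 2$ terms in $\beta_0,\beta_1,\beta_2$ (the paper writes $\bar{w}[k]=\frac{t_s}{C_zD_0}\big(2(w[k]-w[k-2])-\epsilon_0(w[k]+2w[k-1]+w[k-2])\big)$ and drops the first term when $w[k]=w[k-2]$, while you use the equivalent identity $\beta_0+\beta_1+\beta_2=4t_s\epsilon_0/(C_zD_0)$ on fully constant windows), and both charge at most two ``bad'' indices of $\bar{w}$ to each change of $w$ to get the $2c_f(w)$ fraction. Your explicit bookkeeping of the $(n-1)/(n-2)$ boundary factor is, if anything, slightly more careful than the paper's.
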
			
	\begin{prop-proof}
		It can be shown from \eqref{eq:def-wbar} and \eqref{eq:def-betas} that
		\begin{align*}
		\bar{w}[k]=&\frac{t_s}{C_zD_0}\big(2(w[k]-w[k-2])\\
		&\quad -\epsilon_0(w[k]+2w[k-1]+w[k-2])\big).
		\end{align*}
		Since $w$ is bounded, $\exists w_b\geq 0$ s.t. $w[k]\in
		[-w_b,w_b]$. Since $c_f(w) \ll 1$ from the hypothesis, for at least  $1-2c_f(w)$ fraction of  $k$'s,
		$w[k]-w[k-2]=0$, and for those $k$'s, 
		\begin{align*}
		\bar{w}[k]&=-\epsilon_0\frac{t_s}{C_zD_0}\big(w[k]+2w[k-1]+w[k-2]\big)\\
		&\in [\frac{-4\epsilon_0t_sw_b}{C_zD_0},\frac{4\epsilon_0t_sw_b}{C_zD_0}]=[-\bar{\epsilon},\bar{\epsilon}],
		\end{align*}
		which proves the result.     $\hfill \square$
	\end{prop-proof}

	Since the product $RC$ is large for large buildings, of the order of few hours~\cite{kimcaiaribra:2016}, $\epsilon_0$ is small for such buildings. In addition, both $\epsilon_0$ and  $\bar\epsilon$ can be made as small as possible by choosing $t_s$ sufficiently small. Therefore the assumption in Proposition~\ref{prop:sparse}, that $\epsilon_0$ is small, is not a strong one.
	
	\ifArxivVersion
	In order to ensure existence of a solution~\cite{luenberger1984linear}, the above constraints~\eqref{eq:stable_const}-\eqref{eq:DCgain_const}, and~\eqref{eq:thetasigns} are relaxed from strict inequalities to non-strict ones.
	
	\textbf{\textit{Redundancy of constraints}}
	After being relaxed into non-strict inequalities, constraints~\eqref{eq:stable_const}-\eqref{eq:DCgain_const}, and~\eqref{eq:thetasigns} can be compactly written as $\bar{g} =[\bar{g}_1^T, \bar{g}_2^T, \bar{g}_3^T, \bar{g}_4^T]^T\leq 0$, where 
	{\allowdisplaybreaks
     \begin{align*}
	\bar{g}_1(\theta_1,\theta_2) \eqdef&
	\begin{bmatrix}
	-1 & 0 \\ 0 & 1 \\ 0 & -1 \\ 1 & 1 \\ -1 & 1  
	\end{bmatrix}
	\begin{bmatrix}
	\theta_1 \\ \theta_2
	\end{bmatrix}
	+\begin{bmatrix}
	0 \\ 0 \\ -1 \\ -1 \\ -1
	\end{bmatrix}
	    \end{align*}
	\begin{align*}
	\bar{g}_2(\theta_3,\theta_4,\theta_5) \eqdef&
	\begin{bmatrix}
	1 & 0 & 0 \\ 0 & -1 & 0 \\ 0 & 0 & -1 \\ -1 & -1 & -1  
	\end{bmatrix}
	\begin{bmatrix}
	\theta_3 \\ \theta_4 \\ \theta_5
	\end{bmatrix} \\
	\bar{g}_3(\theta_6,\theta_7,\theta_8) \eqdef&
	\begin{bmatrix}
	-1 & 0 & 0 \\ 0 & -1 & 0 \\ 0 & 0 & -1 \\ -1 & -1 & -1  
	\end{bmatrix}
	\begin{bmatrix}
	\theta_6 \\ \theta_7 \\ \theta_8
	\end{bmatrix} \\
	\bar{g}_4(\theta_9,\theta_{10},\theta_{11}) \eqdef&
	\begin{bmatrix}
	1 & 0 & 0 \\ 0 & -1 & 0 \\ 0 & 0 & -1 \\ -1 & -1 & -1  
	\end{bmatrix}
	\begin{bmatrix}
	\theta_9 \\ \theta_{10} \\ \theta_{11}
	\end{bmatrix},
    \end{align*}}
	whose boundaries are shown in Figure~\ref{fig:feasi_reg}. 
	\begin{figure}[htb]
		\begin{subfigure}[htb]{1.05\linewidth}
			\centering
			\includegraphics[width=0.52\linewidth]{\figPATH/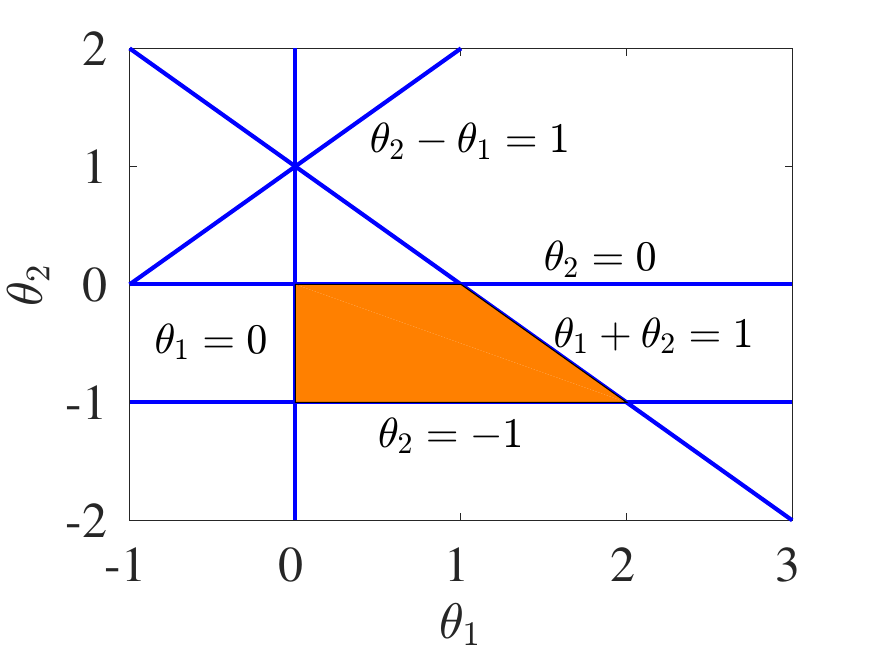}
			\includegraphics[width=0.47\linewidth]{\figPATH/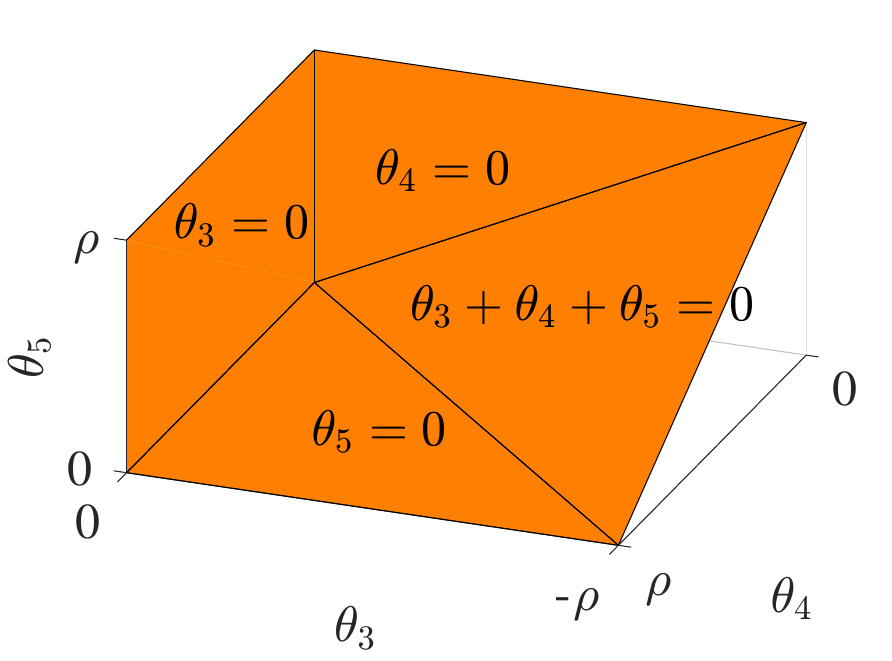}
			\caption{Boundaries of $\bar{g}_1(\theta_1, \theta_2)\leq 0$ are shown in blue (left), and a graph of feasible set $\G_1$ is shown in orange (left). Boundaries of $\bar{g}_2(\theta_3, \theta_4,\theta_5)\leq 0$ and $\G_2$ are shown in orange (right), where $\rho \rightarrow \infty$ (right).}
		\end{subfigure}
		\begin{subfigure}[htb]{1.05\linewidth}
			\centering
			\includegraphics[width=0.52\linewidth]{\figPATH/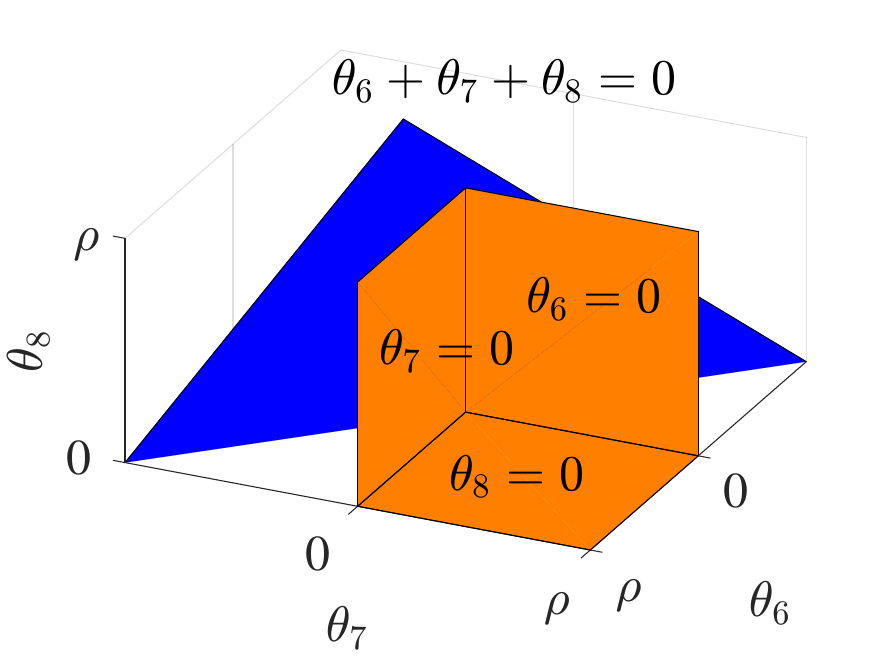}
			\includegraphics[width=0.47\linewidth]{\figPATH/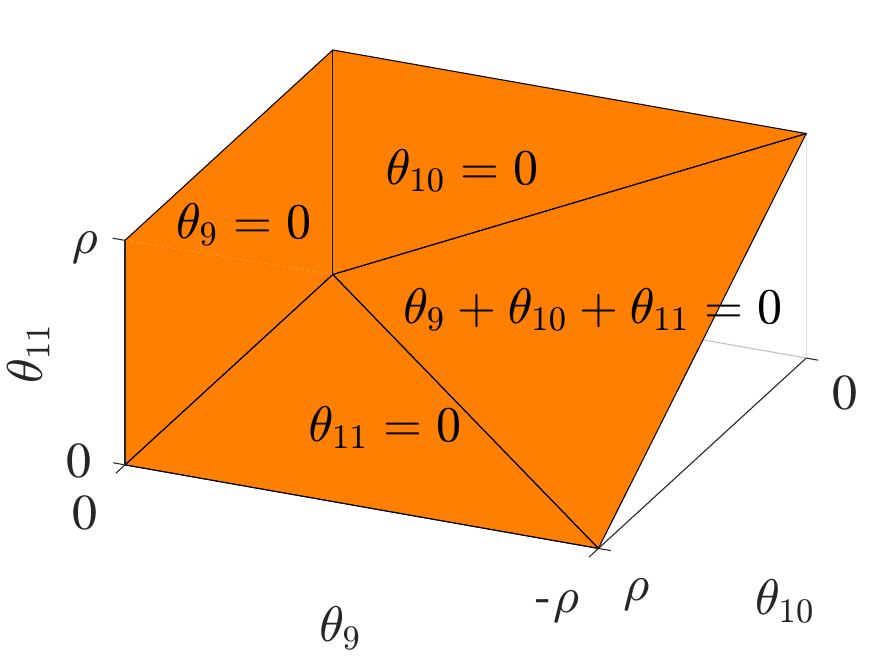}
			\caption{Boundaries of $\bar{g}_3(\theta_6, \theta_7,\theta_8)\leq 0 $ are shown in orange and blue (left), boundaries of $\G_3$ are shown in orange (left). Boundaries of $\bar{g}_4(\theta_9, \theta_{10},\theta_{11})\leq 0$ and $\G_4$ are shown in orange (right). Here $\rho \rightarrow \infty$.}
		\end{subfigure}
		\caption{Feasible sets $\G_k$'s are non-empty and convex.}
		\label{fig:feasi_reg}
	\end{figure}
	Denote the feasible sets for $\bar{g}_1,\bar{g}_2,\bar{g}_3,\bar{g}_4\leq 0$ as
	\begin{align*}
	\begin{aligned}
	\G_1 &= \{(\theta_1,\theta_2)|\bar{g}_1(\theta_1,\theta_2)\leq 0\}\\
	\G_2 &= \{(\theta_3,\theta_4,\theta_5)|\bar{g}_2(\theta_3,\theta_4,\theta_5)\leq 0\}\\
	\G_3 &= \{(\theta_6,\theta_7,\theta_8)|\bar{g}_3(\theta_6,\theta_7,\theta_8)\leq 0\}\\
	\G_4 &= \{(\theta_9,\theta_{10},\theta_{11})|\bar{g}_4(\theta_9,\theta_{10},\theta_{11})\leq 0\}, \\
	\end{aligned}
	\end{align*}
	respectively. The set $\G_1$ and boundaries of $\G_i,$ $i=2, 3, 4$ are shown in Figure~\ref{fig:feasi_reg}.
	%where $\G_1$ forms a right-angled trapezium, $\G_3$ forms an octant, and $\G_2$ and $\G_4$ each forms an octant cut by a plane passing though the origin. 
	
	Noticing from Figure~\ref{fig:feasi_reg}(a) (left) and Figure~\ref{fig:feasi_reg}(b) (left), the last inequality from $\bar{g}_1\leq 0$, i.e., constraint~\eqref{eq:stable_const_redun}, and the last one from $\bar{g}_3\leq 0$, i.e., \eqref{eq:dc_num_const_redun}, are redundant. Mathematically,
	\begin{align*}
	\begin{aligned}
	\bigcap\limits_{i=1}^{5}\{(\theta_1,\theta_2)|\bar{g}_{1,R_i}\leq 0\}&=\bigcap\limits_{i=1}^{4}\{(\theta_1,\theta_2)|\bar{g}_{1,R_i}\leq 0\}\\
	\bigcap\limits_{i=1}^{4}\{(\theta_6,\theta_7,\theta_8)|\bar{g}_{3,R_i}\leq 0\}&=\bigcap\limits_{i=1}^{3}\{(\theta_6,\theta_7,\theta_8)|\bar{g}_{3,R_i}\leq 0\},\\
	\end{aligned}
	\end{align*}
	where $\bar{g}_{1,R_i}: \R^2\rightarrow \R$ and $\bar{g}_{3,R_i}: \R^3\rightarrow \R$ is the i-th entry of the function $\bar{g}_k$, where $k=1,3$ respectively (imagining $\bar{g}_k$ as a column vector). Therefore constraints~\eqref{eq:stable_const_redun} and \eqref{eq:dc_num_const_redun} can be removed without changing the feasible sets. The remaining, linearly independent constraints can be written as
	\begin{align*}
		G_c^u\theta_p+g_c\leq 0, \quad G_c^u:\R^{11} \to \R^{15}
	\end{align*}
	where $G_c^u$ is a full column-rank block diagonal matrix,
	\begin{align*}
	G_c^u &=  diag\Bigg( 
	\left[	\begin{smallmatrix}
	-1 & 0 \\ 0 & 1 \\ 0 & -1 \\ 1 & 1  
	\end{smallmatrix}\right]
		\left[\begin{smallmatrix}
	1 & 0 & 0 \\ 0 & -1 & 0 \\ 0 & 0 & -1 \\ -1 & -1 & -1  
	\end{smallmatrix}\right]
		\left[\begin{smallmatrix}
	-1 & 0 & 0 \\ 0 & -1 & 0 \\ 0 & 0 & -1
	\end{smallmatrix}\right]
		\left[\begin{smallmatrix}
	1 & 0 & 0 \\ 0 & -1 & 0 \\ 0 & 0 & -1 \\ -1 & -1 & -1  
	\end{smallmatrix}\right]\Bigg) \notag \\
	g_c &= 
	\begin{bmatrix}
		0 & 0 & -1 & -1 &0_{1\times 11}
	\end{bmatrix}^T. 
	\end{align*}	%\label{eq:const_mat}
	For future convenience, we write the constraints in the equivalent form:
	\begin{align*}
	\begin{aligned}
	    G_c\theta+g_c\leq 0, \; \text{where} \;
	    G_c = \begin{bmatrix}
	    G_c^u, 
	    & 0_{15\times k_{max}-2}\\
	    \end{bmatrix},\quad
	\end{aligned}
	\end{align*}
	where the inequality is entry-wise.	
	\else
	Some redundant constraints from~\eqref{eq:stable_const}-\eqref{eq:DCgain_const} and~\eqref{eq:thetasigns} can be removed without changing the feasible set; see~\cite{ZengSimultaneousArXiV:2017} for the details. The remaining, linearly independent constraints	can be written as $G_c\theta+g_c\leq 0$, where
	\begin{align*}
	\begin{aligned}
		g_c &= 
		\begin{bmatrix}
			0 & 0 & -1 & -1 &0_{1\times 11}
		\end{bmatrix}^T,
		G_c =
		\begin{bmatrix}
			G_c^u, 0_{15\times k_{max}-2}\\
		\end{bmatrix},\\
		G_c^u &=  diag\Bigg( 
		\left[	\begin{smallmatrix}
			-1 & 0 \\ 0 & 1 \\ 0 & -1 \\ 1 & 1  
		\end{smallmatrix}\right]
		\left[\begin{smallmatrix}
			1 & 0 & 0 \\ 0 & -1 & 0 \\ 0 & 0 & -1 \\ -1 & -1 & -1  
		\end{smallmatrix}\right]
		\left[\begin{smallmatrix}
			-1 & 0 & 0 \\ 0 & -1 & 0 \\ 0 & 0 & -1
		\end{smallmatrix}\right]
		\left[\begin{smallmatrix}
			1 & 0 & 0 \\ 0 & -1 & 0 \\ 0 & 0 & -1 \\ -1 & -1 & -1  
		\end{smallmatrix}\right]\Bigg).  \\
	\end{aligned} 
	\end{align*}
	\fi

	\section{Proposed \algoname\ Algorithm}\label{sec:method}
	Since we expect $w$ to be piecewise-constant and infrequently changing, $\bar{w}$ should be approximately sparse. Let $S\eqdef [0_{\kmax-2\times11} | \;I_{\kmax-2}]$ so that $S\theta=\bar{w}$. We therefore seek a solution to $\mbf{y}=\Phi\theta$ so that $S\theta$ is sparse, by posing the following optimization problem:
	\begin{align}\label{eq:opt} 
	\begin{aligned}
	\hat{\theta} & = \arg \min_{\theta }  \frac{1}{2} \|y -  \Phi \theta\|_2^2 +\lambda \|S\theta\|_1 \\
	& \quad \text{ s. t. } G_c\theta+g_c\leq 0,
	\end{aligned}
	\end{align}
	where $\lambda \geq 0$ is a user-defined weighting factor. The $\ell_1$-norm penalty is to encourage sparsity of the solution; see the discussion in Section \ref{sec:intro}. 
	\ifArxivVersion
	The problem \eqref{eq:opt} is an extension of the so-called ``generalized lasso" problem~\cite{ali2018generalized}:
	\begin{equation*}%\label{eq:gen_lasso} 
	\hat{\theta}  = \arg \min_{\theta }  \frac{1}{2} \|y -  \Phi \theta\|_2^2 +\lambda \|S\theta\|_1,
	\end{equation*}
	with the extension being the addition of the linear inequality constraint.  We therefore call problem \eqref{eq:opt} the ``linearly constrained generalized lasso problem", or lcg-lasso for short. 
	\else
	Problem~\eqref{eq:opt} is called the ``linearly constrained generalized lasso problem", or lcg-lasso for short.
	\fi 
	The estimated plant parameters $\hat{\theta}_p$ and estimated transformed disturbance $\hat{\bar{w}}$ can be recovered from $\hat\theta$ since $\theta^T  = [\theta_p^T, \bar{w}^T]$.
	
	\ifArxivVersion
	The next result establishes a few properties of the optimization problem \eqref{eq:opt}.
	\else
	The next result establishes a few properties of the optimization problem \eqref{eq:opt}, whose proof can be found in~\cite{ZengSimultaneousArXiV:2017}.
	\fi	
  	We call a point $\theta$ \emph{physically meaningful} if none of the three SISO transfer functions in~\eqref{eq:dtmodel} is identically zero. 
	\begin{proposition}\label{prop:feasible_convex_regular}
		The optimization problem \eqref{eq:opt} is feasible, convex, and every physically meaningful feasible $\theta$ is a regular point of the constraints. 
	\end{proposition}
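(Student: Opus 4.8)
The plan is to check the three assertions separately; convexity and feasibility are routine, and the linear-independence (regularity) claim carries the real content.

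\emph{Convexity and feasibility.} For convexity I would observe that the objective of \eqref{eq:opt} is the sum of $\|y-\Phi\theta\|_2^2$, a convex quadratic with positive-semidefinite Hessian $\Phi^T\Phi$, and $\lambda\|S\theta\|_1$ with $\lambda>0$, a norm of a linear image of $\theta$ and hence convex; the constraints gathered in $g(\theta)\le0$ --- stability \eqref{eq:stable_constraint}, sign \eqref{eq:thetasigns}, and DC-gain \eqref{eq:dc_num_const345},\eqref{eq:dc_num_const91011} --- are all affine, so the feasible set is a polyhedron. Thus \eqref{eq:opt} minimizes a convex function over a convex set. For feasibility I would use that these constraints were derived in Section~\ref{sec:RC} as properties enjoyed by the $\theta$ produced from any physically admissible RC model: picking arbitrary positive $C_z,C_w,R_z,R_w,\Ae$ and a sampling time $t_s$ small enough to satisfy the bound stated before \eqref{eq:thetasigns}, and forming $\theta$ via \eqref{eq:theta_def} and \eqref{eq:def-betas}, yields a point meeting the strict and hence the relaxed inequalities, so the feasible set is nonempty.

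\emph{Regularity.} Since every component of $g$ is affine, its gradient is a constant vector, and a feasible $\theta$ is a regular point precisely when the gradients of the constraints active at $\theta$ are linearly independent. The organizing observation is that these gradients split into four groups whose coordinate supports are pairwise disjoint: the stability and sign constraints involving $(\theta_1,\theta_2)$; the sign constraints on $(\theta_3,\theta_4,\theta_5)$ together with the DC-gain constraint \eqref{eq:dc_num_const345}; the three sign constraints on $(\theta_6,\theta_7,\theta_8)$; and the sign constraints on $(\theta_9,\theta_{10},\theta_{11})$ with \eqref{eq:dc_num_const91011}. Disjoint supports reduce global independence to independence within each group. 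I would then argue group by group. In the $(\theta_1,\theta_2)$ group the active-set equalities of any three of the four constraints are inconsistent, so at most two are active; the only parallel gradient pair, from $-\theta_2\le1$ and $\theta_2\le0$, is likewise never co-active (it would force $\theta_2=-1$ and $\theta_2=0$), so any active pair is independent. The $(\theta_6,\theta_7,\theta_8)$ group contributes only three sign gradients along distinct axes, always independent. In each of the two remaining groups, any three of the four gradients are independent by a $3\times3$ determinant check, and the sole dependence occurs when all four are active; but the active-set equalities then force the corresponding numerator coefficients of \eqref{eq:dtmodel} to vanish (e.g.\ $\theta_3=\theta_4=\theta_5=0$), making that SISO transfer function identically zero, which physical meaningfulness forbids. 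Hence at a physically meaningful point at most three are active in each such group and are independent, and combining the four groups gives independence of all active gradients.

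The crux is the regularity argument, specifically its last step: recognizing that deleting the redundant DC-gain inequality \eqref{eq:dc_num_const678} is exactly what makes the $(\theta_6,\theta_7,\theta_8)$ group automatically regular, whereas in the other two groups the retained DC-gain gradient introduces a genuine potential dependence that is removed precisely by the physically-meaningful hypothesis. The secondary care point is checking, in the $(\theta_1,\theta_2)$ group, that no three constraints can be simultaneously active and that the one parallel gradient pair is never co-active.
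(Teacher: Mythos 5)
Your proposal is correct and follows essentially the same route as the paper's proof: decompose the constraints into four groups on disjoint coordinate blocks, note that the objective is convex and each block's feasible region is a nonempty convex polyhedron, and establish regularity group-by-group by observing that full activation of a numerator block's constraints forces that block to the origin, which the physically-meaningful hypothesis excludes. You merely spell out the linear-independence checks that the paper dispatches ``by inspection,'' and you certify nonemptiness by constructing $\theta$ from an admissible RC model rather than from the geometric description of the regions --- both harmless variations.
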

	\ifArxivVersion
	\begin{prop-proof}
	The feasible set for the constraint $G_c\theta+g_c\leq0$ is
		\begin{align*}
		\G \eqdef \G_1 \times \G_2 \times \G_3 \times \G_4,
		\end{align*}
		where $\times$ denotes the Cartesian product. Because $\G_k$'s are non-empty and convex, $\G$ is also non-empty and convex. The objective function is convex since it is a sum of two convex functions. Therefore the optimization problem \eqref{eq:opt} is feasible and convex.

		Notice that the origins in the sets $\G_2$, $\G_3$, and $\G_4$ are not physically meaningful as defined above, and these are the only non-meaningful points. Hence, at any physically meaningful feasible point, each $\bar{g}_k$ will have no more than two active constraints. It can be verified by inspection (see Figure \ref{fig:feasi_reg}) that the gradients of these active constraints are linearly independent. Therefore, every physically meaningful feasible point is a regular point of the constraints.
		$\hfill \square$
	\end{prop-proof}
	\fi

	\subsection{Regularization Parameter Selection}\label{sec:regu_select}
	The selection of $\lambda$ determines the solution to lcg-lasso problem~\eqref{eq:opt}. At one extreme, $\lambda = 0$ will lead to a least-squares solution to \eqref{eq:opt} that will suffer from over-fitting. A larger $\lambda $ will make the resulting $S\theta$ sparser. We therefore propose a heuristic to select $\lambda$ by searching in a range $[0, \lambda_{\max}]$. The following proposition provides both the value of $\lambda_{\max}$ and the rationale for stopping the search at that value. 	
	\begin{proposition}\label{prop:lambda_max}
	Every solution $\hat{\theta}$ to \eqref{eq:opt} satisfies $S\hat \theta  = 0 =\hat{\bar{w}}$ if and only if $\lambda > \lambda_{max}\eqdef ||y||_\infty$. 
	\end{proposition}
	\begin{prop-proof}
	Since all inequalities are affine, and $\theta=0$ is  feasible, a weaker form of Slater's condition is satisfied which means strong duality holds~\cite[eq.~(5.27)]{boyd2004convex}. Let $\beta \eqdef \Phi\theta, \quad 	\chi\eqdef S\theta, \quad z\eqdef G_c\theta$.  The augmented Lagrangian function of~\eqref{eq:opt} is:
	\begin{align*}
	   \Lagr(\theta,& z,\chi,\beta;\gamma,\zeta,\mu,\eta) 
	    = \frac{1}{2} \|y - \beta\|_2^2 
	         +\lambda \|\chi \|_1 
	         + \gamma^T(z+g_c) \notag \\
	    & + \mu^T(\chi -S\theta)
	         + \eta^T(\beta-\Phi\theta)  
	         + \zeta^T(z-G_c\theta),%\label{eq:lagr}
	\end{align*}
	where $\gamma \geq 0$. 	The dual function is 
	\begin{align*}
	\begin{aligned}
	&g(\gamma,\zeta,\mu,\eta) = \inf_{\theta,z,\chi,\beta}\Lagr\\
	& \quad = \inf_\theta-(\eta^T\Phi+\mu^TS+\zeta^TG_c)\theta
	        + \inf_z(\zeta^T+\gamma^T)z  \\
	& \quad +\inf_\chi (\lambda||\chi||_1 +\mu^T\chi)
        	+\inf_\beta(\frac{1}{2} \|y - \beta\|_2^2 +\eta^T\beta )	 
	        + \gamma^Tg_c.
	\end{aligned}
	\end{align*}
	Since a linear function is bounded below only when it is identically zero, thus
	\begin{align*}
	\begin{aligned}
	&\inf_\theta-(\eta^T\Phi+\mu^TS+\zeta^TG_c)\theta=
	    \begin{cases}
	     0 & \Phi^T\eta=-S^T\mu-G_c^T\zeta\\
	     -\infty & \text{otherwise}
	    \end{cases},\\
	&\inf_z(\zeta^T+\gamma^T)z=
	    \begin{cases}
	    0 & \zeta+\gamma=0, \gamma\geq 0\\
	    -\infty & \text{otherwise}
	    \end{cases},\\    
	&\inf_\chi (\lambda||\chi||_1 +\mu^T\chi) 
	=\sum_{k=1}^{k_{max}-2}\inf_{\chi_k} (\lambda|\chi_k| +\mu_k\chi_k)\\
	&\hspace{80pt}=\begin{cases}
	0 & ||\mu||_\infty \leq \lambda\\
	-\infty & \text{otherwise}
	\end{cases}.
	\end{aligned}
	\end{align*}	
	The corresponding minimizers for $||\mu||_\infty\leq \lambda$ satisfy:
	\begin{align}\label{eq:sol_inf_chi}
	\begin{cases}
	\text{if}\; \mu_k = -\lambda, \;\; \hat{\chi}_k = \text{any non-negative number} \\
	\text{if}\; |\mu_k| < \lambda, \;\; \hat{\chi}_k = 0\\
	\text{if}\; \mu_k = \lambda, \;\;\hat{\chi}_k = \text{any non-positive number}\\
	\end{cases}.
	\end{align}
	Finally the infimum over $\beta$ is
	\begin{align*}
	\inf_\beta(\frac{1}{2} \|y - \beta\|_2^2 +\eta^T\beta ) = 
	    \frac{1}{2} \|y \|_2^2 - \frac{1}{2} \|y - \eta\|_2^2,
	\end{align*}
	which is derived by setting $\frac{\partial\Lagr}{\partial\beta}=0$ and substituting the resulting minimizer $\beta = y-\eta$. 
	Therefore the dual function can be simplified as
   	\begin{align}\label{eq:dual_func}
   	&g(\gamma,\mu,\eta,\zeta) =
   	\begin{cases}
   	\frac{1}{2} \|y \|_2^2 - \frac{1}{2} \|y - \eta\|_2^2 + \gamma^Tg_c & C1\\
   	-\infty & \text{o/w}
   	\end{cases},
   	\end{align}
   	where $C1$ stands for the following:
   	\begin{align}\label{eq:opt_dual_cond}
   	C1: \quad       \begin{cases}
   	\Phi^T\eta=-S^T\mu-G_c^T\zeta\\
   	\zeta+\gamma=0, \; \gamma\geq 0\\
   	||\mu||_\infty \leq \lambda.
   	\end{cases}
   	\end{align}     
	The dual variables $\gamma,\mu,\eta,\zeta$ are dual feasible because \eqref{eq:opt_dual_cond} has a trivial solution. The first equation from \eqref{eq:opt_dual_cond} has the form:
	    \begin{align}%\label{eq:phi_S_G1_struc}
	    \begin{bmatrix}
	    \Psi^T_{11\times (k_{max}-2)} \\ I_{k_{max}-2}
	    \end{bmatrix}
	    \eta & = -
	    \begin{bmatrix}
	    0_{11\times (k_{max}-2)} \\ I_{k_{max}-2}
	    \end{bmatrix}
	    \mu -
	    \begin{bmatrix}
	    (G_c^u)^T_{11\times 15}\\
	    0_{(k_{max}-2)\times 15}\\
	    \end{bmatrix}\zeta, 	\notag \\
	    \implies &
	    \begin{aligned}
	    &\Psi^T\eta = -(G_c^u)^T\zeta, \\
	    &\eta  = -\mu.
	    \end{aligned}\label{eq:phi_S_G1_struc2}
	    \end{align}    
	which has infinite number of solutions $(\eta,\mu,\zeta)$ since $\Psi^T$ and $(G_c^u)^T$ both have full row rank. Eliminating $\eta$ and $\zeta $ from \eqref{eq:dual_func} using \eqref{eq:opt_dual_cond}-\eqref{eq:phi_S_G1_struc2}, the dual problem is
	\begin{align}\label{eq:dual_prob}
	\begin{split}
	(\hat{\gamma},\hat{\mu}) & = \max_{\gamma,\mu}  \frac{1}{2} \|y \|_2^2 - \frac{1}{2} \|y + \mu\|_2^2 + \gamma^Tg_c\\
	& \text{ s. t. } 
	     -\Psi^T\mu = (G_c^u)^T\gamma, \gamma \geq 0,\\
	& \qquad     ||\mu||_\infty \leq \lambda.
	\end{split}
	\end{align}
	For a given $\lambda\geq 0$, two scenarios arise when solving \eqref{eq:dual_prob}. 

	Scenario 1:  \emph{$\lambda \leq \|y\|_\infty$:} In this scenario, the $k$-th entry of any solution $\hat{\mu}$ to \eqref{eq:dual_prob} will satisfy $		|\hat{\mu}_k| = \min(\lambda,|y_k|)$ and there is at least one entry  that satisfies $|\hat{\mu}_k| = \lambda$. The corresponding solution $\hat{\chi}_k$ is non-unique according to \eqref{eq:sol_inf_chi}. Hence $\hat{\chi}$ is non-unique. 

	Scenario 2:  \emph{$\lambda > \|y\|_\infty$:} In this case the solution to \eqref{eq:dual_prob} satisfies $\hat{\mu}=-y$, and therefore, $\|\hat{\mu}\|_\infty = \|y\|_\infty < \lambda$. From \eqref{eq:sol_inf_chi}, we have that $\hat{\chi}=0$.  Since $\chi = S\theta = \bar{w}$, the result is proved. $\hfill \square$
	\end{prop-proof}	

	\textbf{\textit{Heuristic for selecting $\lambda$:}}
	The heuristic we propose to choose $\lambda$ is based on the L-curve method, and uses the result from the previous proposition. First, plot both the solution norm $\| S\theta\|_1$ and residual norm $\|y -  \Phi \theta\|_2$ against $\lambda$ by repeatedly solving Problem~\eqref{eq:opt}  for variocus $\lambda $ in $ [0,\; \lambda_{max}]$, where $\lambda_{max}$ is defined in Proposition~\ref{prop:lambda_max}. An illustration of these two plots is shown in Figure~\ref{fig:L1-L2-curve}. Second, identify a value $\lambda_1$ so that the solution norm is smaller than a user-defined threshold for $\lambda>\lambda_1$, and then identify $\lambda_2$ so that the residual norm is smaller than a user-defined threshold for $\lambda<\lambda_2$. If $\lambda_2>\lambda_1$, choose $\lambda$ to be $\lambda_1$. If not, pick another threshold, and continue until this condition is met. Figure~\ref{fig:L1-L2-curve} shows an example of having these curves both lie in picture.
	\begin{figure}[htb]
		\includegraphics[width=0.85\linewidth]{\figPATH/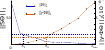}
		\centering
		\caption{Illustration of regularization parameter selection}		
		\label{fig:L1-L2-curve}
		\vspace{-0.2 cm}
	\end{figure}
	
	\section{Evaluation of Proposed \algoname\ algorithm}\label{sec:evaluation}
	Numerical implementation of the proposed method is performed by using the \texttt{cvx} package for solving convex problems in \matlab~\cite{cvx}.

    Two experiments are conducted in order to test the proposed method \algoname, one using simulation data and the other using real building data collected from Pugh Hall, a commercial building in the University of Florida campus. The method proposed in~\cite{kimcaiaribra:2016} is also implemented as a comparison, which is referred to as the LD (Lumped Disturbance) method. We remark here the LD method is non-convex, and the results from the LD method presented here are obtained with  a multi-start approach with random initial guesses.
    
    Simulation data is generated by simulating the continuous-time RC model~\eqref{eq:RC}. The parameters of the model were taken from \cite[Table 1]{coffman2018simultaneous}, which uses a model of the same structure.
	\ifArxivVersion
	Four scenarios are tested:
	\begin{enumerate}
		\item \emph{OL-PW:} Open-loop with piecewise-constant disturbance;
		\item \emph{OL-NPW:} Open-loop with not piecewise-constant disturbance;
		\item \emph{CL-PW:} Closed-loop with piecewise-constant disturbance;
		\item \emph{CL-NPW:} Closed-loop with not piecewise-constant disturbance;
	\end{enumerate}	
	The algorithm is expected to perform well in the OL-PW scenario since the disturbance satisfies the piecewise-constant assumption the method is designed for, and identification with open-loop data is generally easier than with closed loop data~\cite{ljungbook:1999}. The CL-NPW scenario is the most relevant in practice, but it is likely to be the most challenging for the method. We therefore focus on discussing the results from the CL-NPW scenario in the sequel. 

	In the two open-loop scenarios, the input component $\qhvac$ is	somewhat arbitrarily chosen, while in the closed-loop scenarios, \qhvac\ is decided by a PI-controller that tries to maintain the zone temperature at a setpoint $T^\mathrm{ref}$. To have exciting input to aid in identification, the setpoint $T^\mathrm{ref}$ is chosen to be a PRBS sequence~\cite{ljungbook:1999}. To ensure that occupant comfort is not compromised, the setpoint is constrained to	lie within $22$\degree C and $27$\degree C. The input components, ambient temperature from \url{weatherunderground.com}, and solar irradiance data from NSRDB: \url{https://nsrdb.nrel.gov/}, both for Gainesville, FL, are used in all four scenarios. The disturbance signal $\dist$ is chosen by scaling CO$_2$ measurements from Pugh Hall. We assume the scheduled occupancy is correlated to the CO$_2$ level. The training data are shown in Fig.~\ref{fig:week_input}. 
	\begin{figure}[tb]
		\centering
		\includegraphics[width=1\linewidth]{\figPATH/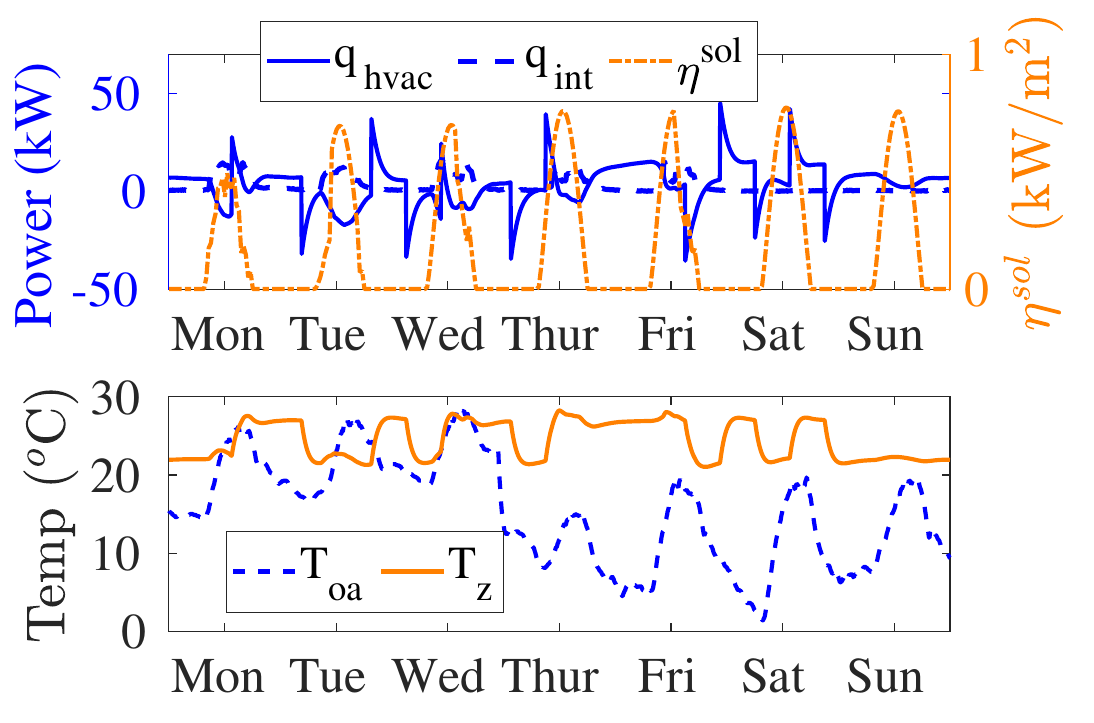}
		\caption{Training data from simulated building. The data  $\solirr, T_{oa},\dist$ shown here are used in all four scenarios; 
			$\qhvac, T_z$ shown here are for the CL-NPW scenario.}
		\label{fig:week_input}
	\end{figure}	
	
    For the real building, measurements of $\qhvac$ and $T_z$ are collected from a large zone (an auditorium) in Pugh Hall. See Fig.~\ref{fig:week_input_bldg}. The location of the room from which measurements are collected is shown in Fig.~\ref{fig:location_audi}. The ambient temperature and solar irradiance data, collected from the same online source at another week, are used.
	\begin{figure}[tb]
		\centering
		\includegraphics[width=1\linewidth]{\figPATH/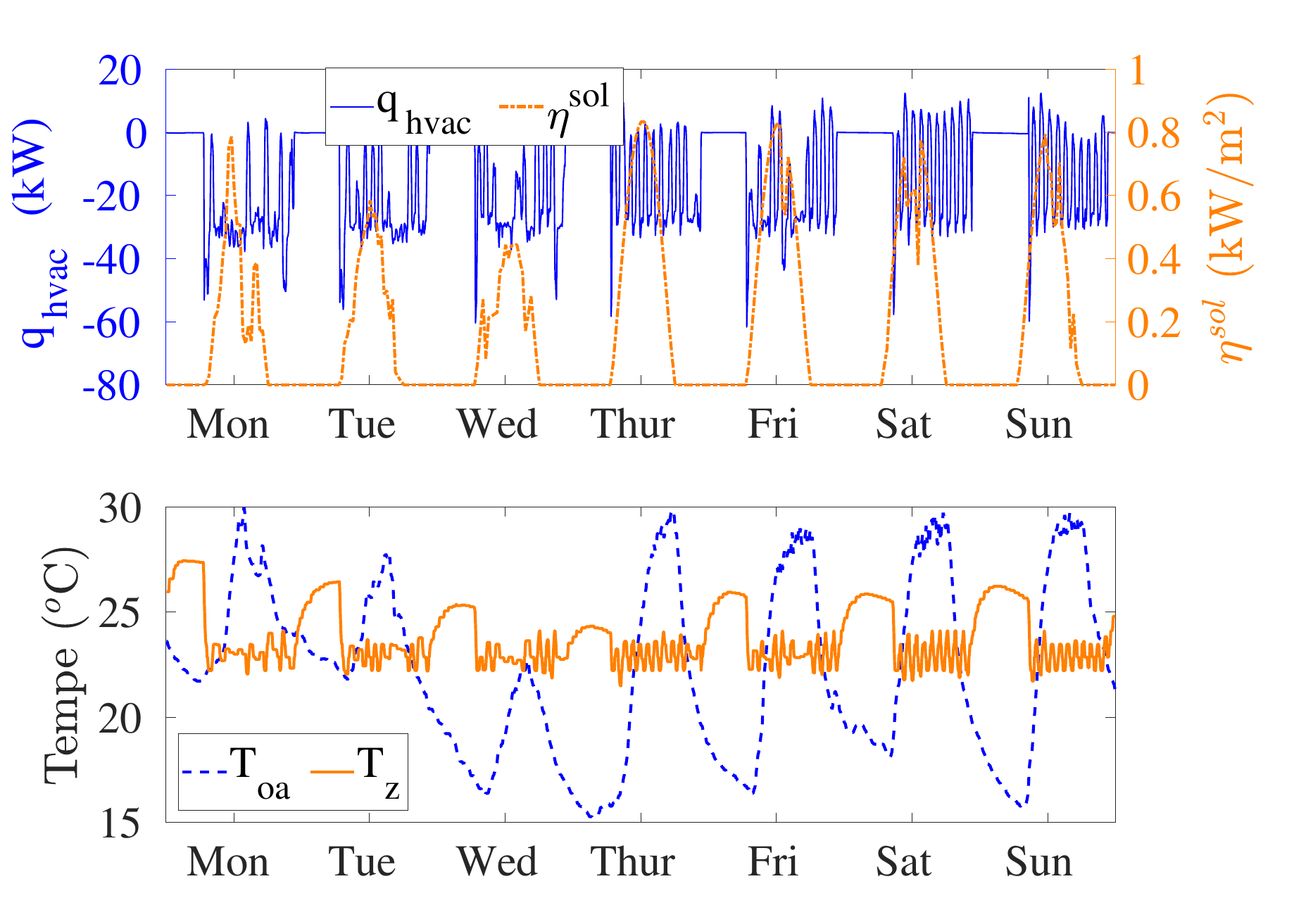}
		\caption{Training data from real building.}
		\label{fig:week_input_bldg}
	\end{figure} 	
	\else
	More details can be found in~\cite{ZengSimultaneousArXiV:2017}.
	
	We remark here that the simulation experiment is designed to put the proposed method to test: (i) data are collected from a closed-loop simulation, and (ii) the disturbance signal is not piecewise-constant.
	
	For the real building, measurements of $\qhvac$ and $T_z$ are collected from a large zone (an auditorium) in Pugh Hall; see Fig.~\ref{fig:location_audi}. The ambient temperature and solar irradiance data, collected from the same online source in~\cite{ZengSimultaneousArXiV:2017} at another week, are used.
	\fi     
	\begin{figure}[htb]
		\centering
		\includegraphics[width=0.95\linewidth]{\figPATH/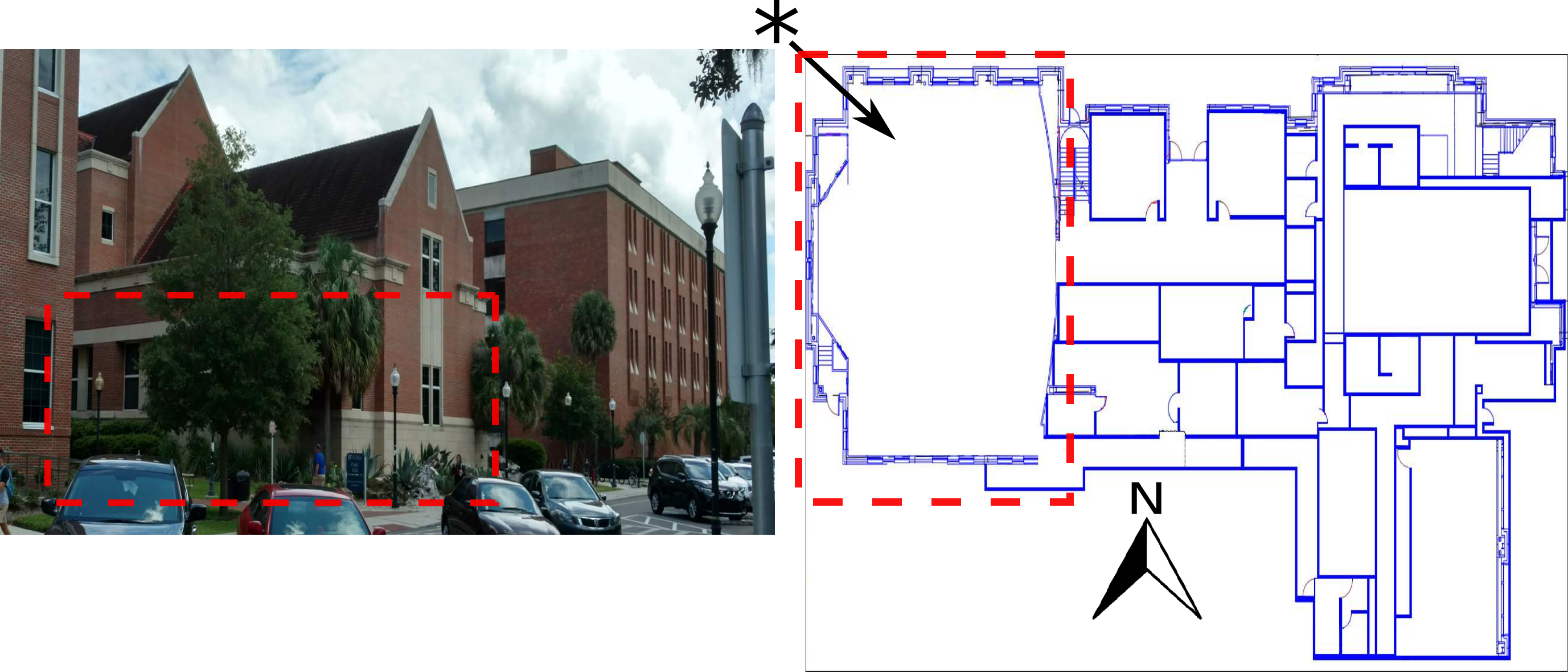}
		\caption{Pugh hall photograph (left) and floor plan (right), with the zone from which building data are collected shown enclosed in dashed lines. The ``$\ast$" denotes the location from where the photograph was taken and the arrow denotes direction of the camera.}
		\label{fig:location_audi}
	\end{figure}
	
	\subsection{Algorithm evaluation with simulation data}\label{Sec:results-plantid}
	\textbf{\textit{Parameters}}
	Table~\ref{tab:theta-compare} shows the true values of the plant
  	parameters, $\theta_p$, and the	corresponding estimation errors (in percentage) \ifArxivVersion from CL-NPW scenario\fi. 
  	First, the two parameters, $\theta_1, \theta_2$, that determine the characteristic equation are estimated highly accurately. Second, there is more error in the estimate of numerators. While some are more accurate than others, the numerator coefficients corresponding to the input $\solirr$ has the most error. A possible reason for this high error is the lack of richness in the $\solirr$ data. Parameter estimates by the proposed method are slightly more accurate than those by the LD method.
		\begin{table}[htb]		
		\caption{Plant parameters and errors in their estimates.}	
		\begin{center}
			\begin{tabular}{c|c|c|c|c}
				\multicolumn{2}{c|}{\multirow{2}{*}{$\theta_p$}} & \multicolumn{2}{c|}{$\frac{\hat{\theta}_p-\theta_p}{\theta_p}\%$} & 
				\multirow{2}{*}{input}\\ 
				\multicolumn{2}{c|}{}                  &  LD   &  SPDIR   &  \\ \hline
				$\theta_1   $ & $ 1.97565 $ & $-0.061 $ & $ -0.042 $ &  \\
				$\theta_2   $ & $-0.97573 $ & $-0.13  $ & $ -0.085 $ &  \\ \hline
				$\theta_3   $ & $-4.35\times 10^{-3} $ & $ 94.3  $ & $  8.0   $ &  \\
				$\theta_4   $ & $ 5.21\times 10^{-5} $ & $ 16262 $ & $ 108.2  $ & \qhvac \\
				$\theta_5   $ & $ 4.41\times 10^{-3} $ & $-100.0 $ & $   6.4  $ &  \\ \hline
				$\theta_6   $ & $ 1.86\times 10^{-5} $ & $ 67.9  $ & $ 48.9   $ &  \\
				$\theta_7   $ & $ 3.72\times 10^{-5} $ & $ -15.3 $ & $  -22.3 $ & $T_{oa}$ \\
				$\theta_8   $ & $ 1.86\times 10^{-5} $ & $-100.0 $ & $ 68.4   $ &  \\ \hline
				$\theta_9   $ & $-3.05\times 10^{-2} $ & $ 430.5 $ & $232.1   $ &  \\
				$\theta_{10}$ & $ 3.65\times 10^{-4} $ & $ 44577 $ & $19324   $ & \solirr \\
				$\theta_{11}$ & $ 3.08\times 10^{-2} $ & $-100.0 $ & $ 2.9    $ & \\ \hline
			\end{tabular}			
			\label{tab:theta-compare}
		\end{center}
	\end{table}	

	\ifArxivVersion
	\begin{figure*}[htb]	
		\begin{minipage}[t]{0.5\linewidth}
			\includegraphics[width=0.95\linewidth]{\figPATH/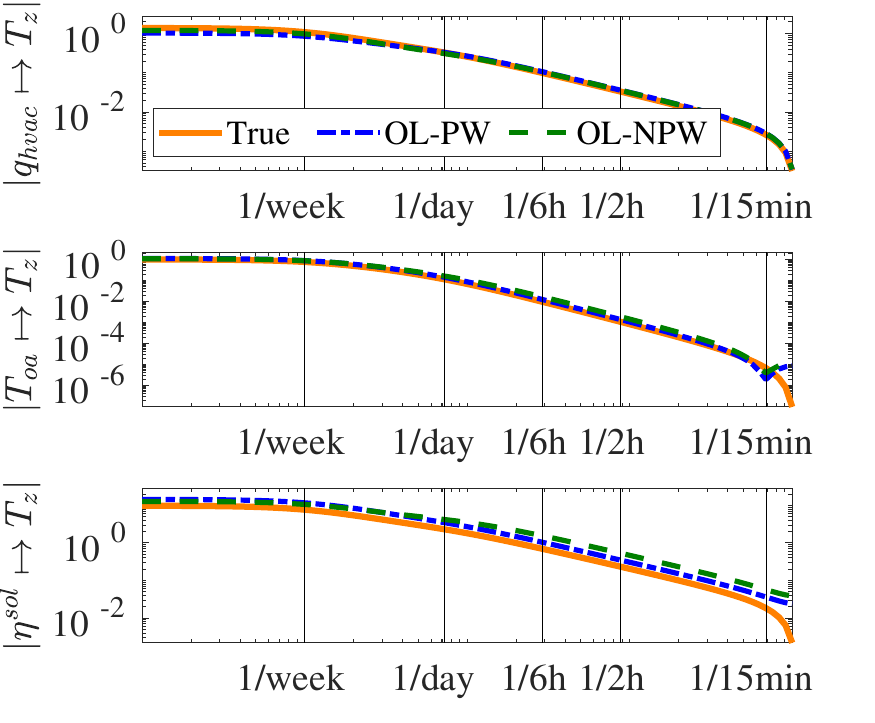}
			\centering
			%		\caption{Algorithm evaluation on simulation data: Bode magnitude plots of the true and identified systems for OL scenarios.}		
			%		\label{fig:bode_OL_cmp}
		\end{minipage}\hspace{0.01\linewidth}
		\begin{minipage}[t]{0.5\linewidth}	
			\includegraphics[width=0.95\linewidth]{\figPATH/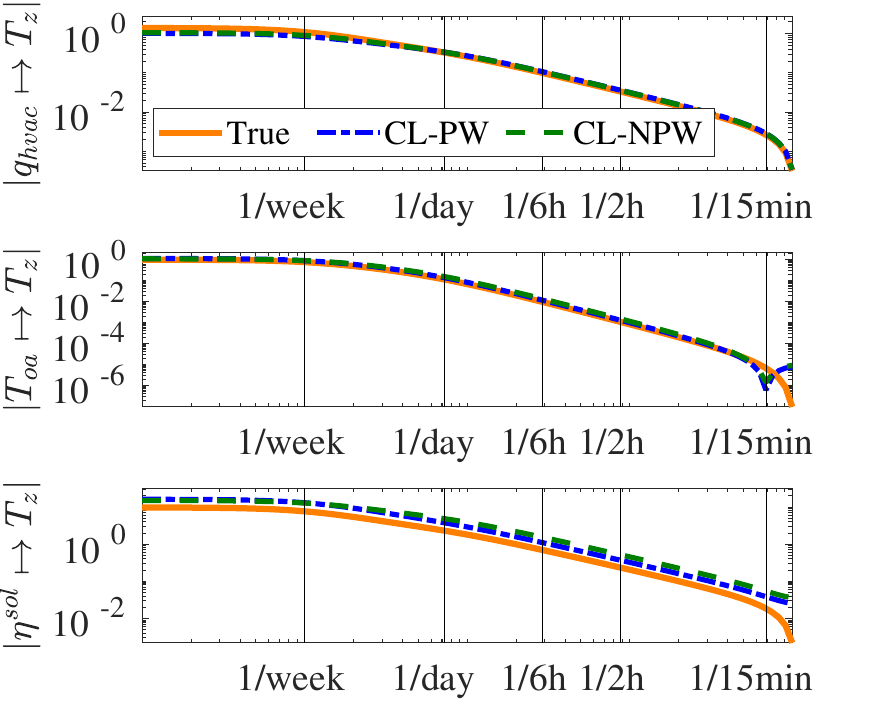}
			\centering
			%		\caption{Algorithm evaluation on simulation data: Bode magnitude plots of the true and identified systems for CL scenarios.}		
			%		\label{fig:bode_CL_cmp}
		\end{minipage}\hspace{0.01\linewidth}
		\caption{Algorithm evaluation on simulation data: Bode magnitude plots of the true and identified systems for 4 scenarios.}
		\label{fig:bode_prop_4cases}		
	\end{figure*}	
	\begin{figure}[htb]	
		\includegraphics[width=0.95\linewidth]{\figPATH/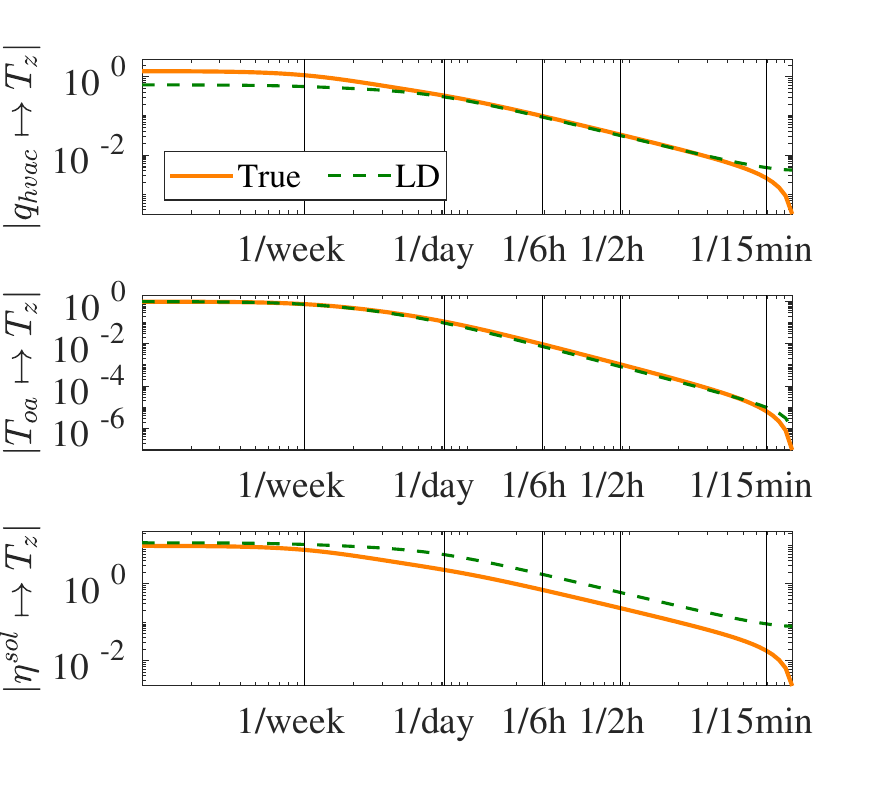}
		\centering
		\caption{LD method application on simulation data: Bode magnitude plots of the true and identified systems.}		
		\label{fig:bode_sim_LD}
	\end{figure} 
	\else
	\begin{figure}[htb]
		\includegraphics[width=0.99\linewidth]{\figPATH/bode_sim_prop_LD.pdf}
		\centering
		\caption{Algorithm evaluation on simulation data: Bode magnitude plots of the true and identified systems.}		
		\label{fig:bode_jnl_cmp}
	\end{figure}
	\fi
	   	 
	\textbf{\textit{Frequency response}}
	For prediction accuracy, frequency response is more important than individual parameters. 
	\ifArxivVersion
	Figure~\ref{fig:bode_prop_4cases} compares the frequency response of the identified plants with their true values for the two open-loop and closed-loop scenarios, respectively.
	\else
	Fig.~\ref{fig:bode_jnl_cmp} shows the Bode plots of the true and identified models. \fi	
	For the transfer function from input $\qhvac$ to output $T_z$, the maximum absolute error in the estimated frequency response is:
	\begin{align*}
	\max_{\omega}\frac{|\hat{G}_{\qhvac \rightarrow T_z}(j\omega) - G_{\qhvac \rightarrow T_z}(j\omega)|}{|G_{\qhvac \rightarrow T_z}(j\omega) |} = 0.24
	\end{align*}
	\ifArxivVersion and occurs at $ \omega = 1/(10$ weeks) for the CL-PW scenario. The maximum errors for the transfer functions from $T_{oa}$ and $\eta^{sol}$ to $T_z$ occur at the Nyquist frequency. Frequency responses of identified models from the LD methods for the CL-PW scenario are similar; see Figure~\ref{fig:bode_sim_LD}.
	\else
	and occurs at $ \omega = 1/(10$ weeks). The maximum errors for the transfer functions from $T_{oa}$ and $\eta^{sol}$ to $T_z$ occur at the Nyquist frequency. Frequency responses of identified models from the proposed \algoname\ and the LD methods are similar.
	\fi 
	   
    \textbf{\textit{Disturbance}}
    \ifArxivVersion
    The estimated transformed disturbance, $\hat{\bar w}$, for all four scenarios are shown in Fig.~\ref{fig:Wbar_cmp}. The estimates are quite accurate when the true values are large, but less accurate otherwise. However, the estimates capture the trend of the true values,
    \begin{figure}[htb]
    	\includegraphics[width=1.1\linewidth]{\figPATH/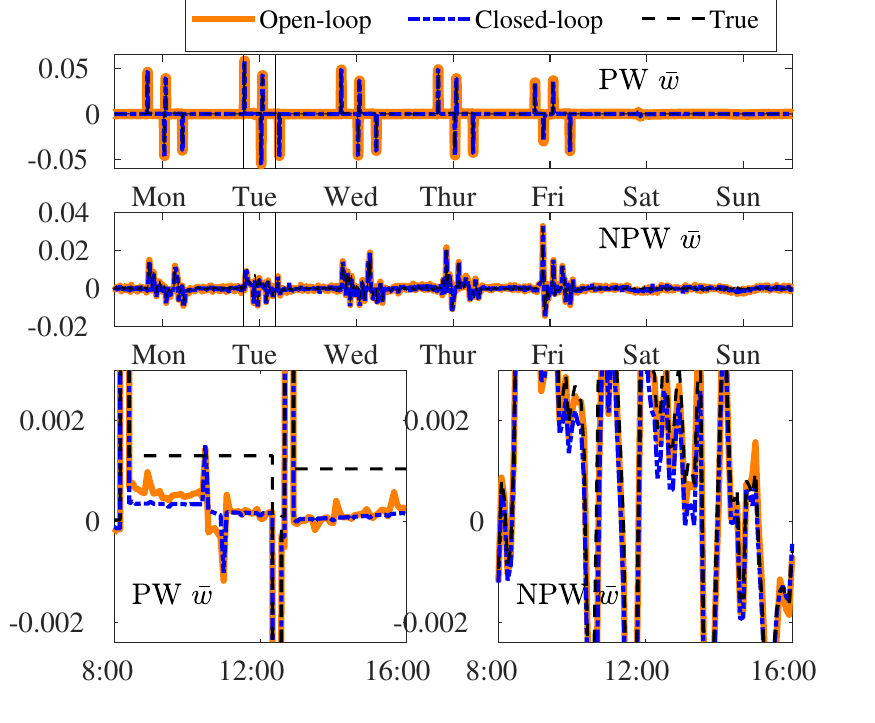}
    	\centering
    	\caption{Algorithm evaluation on simulation data: comparison of identified and actual transformed disturbance. Bottom two plots are zoomed version on Tuesday of the top two plots.}		
    	\label{fig:Wbar_cmp}
    \end{figure}	
    \else    	
	The estimated transformed disturbance, $\hat{\bar w}$, is shown in Fig.~\ref{fig:Wbar_cmp}. The estimates capture the trend of the true values,
    \begin{figure}[htb]
    	\includegraphics[width=0.99\linewidth]{\figPATH/wbar_sim_CLNPW.pdf}
    	\centering
    	\caption{Algorithm evaluation on simulation data: comparison of identified and actual transformed disturbance. Bottom plot is zoomed version on Thursday of the top plot.}		
    	\label{fig:Wbar_cmp}
    \end{figure}
    \fi
    even when the true disturbance is not piecewise-constant, in which case the transformed disturbance may be neither approximately sparse nor infrequently changing. Since the LD method identifies an output disturbance while the proposed method identifies an input disturbance, the disturbance estimates are not comparable.
        
	\textbf{\textit{Zone temperature prediction}}
	The plant identified with data from one week is used to predict temperatures in another week. The disturbance data is the same between the training and validation data sets but the input $u$ and output $y$ data sets are distinct. 
	\ifArxivVersion
	\begin{figure*}[htb]
		\includegraphics[width=0.9\linewidth]{\figPATH/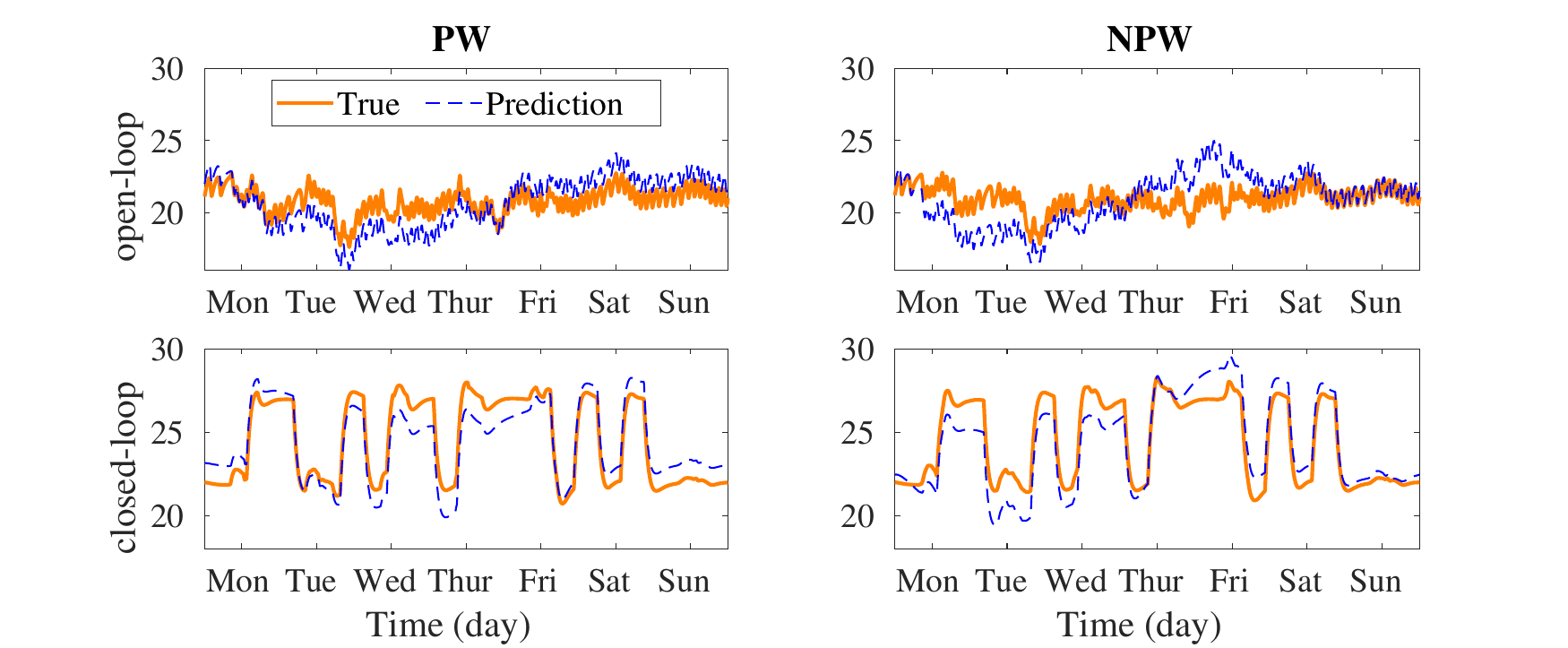}
		\centering
		\caption{Algorithm evaluation on simulation data: predicted and actual zone temperature (validation dataset).}
		\label{fig:Tr_sim_prop4case}
	\end{figure*}
	\begin{figure}[htb]
	\includegraphics[width=0.99\linewidth]{\figPATH/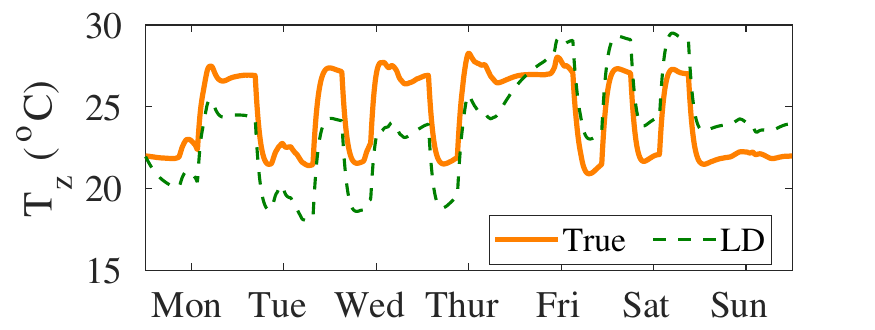}
	\centering
	\caption{LD methods application on simulation data: predicted and actual zone temperature (validation dataset).}
	\label{fig:Tr_sim_LD}
	\end{figure}	
	The rms value of the prediction error of zone temperature is 1.2 \degree C for the proposed \algoname\ method in all 4 scenarios, see Figure~\ref{fig:Tr_sim_prop4case}. As we can see from the figure, the error is more pronounced in certain days of the week. The rms error is 1.1 \degree C for the LD method, see Figure~\ref{fig:Tr_sim_LD}.
	\else
	The rms value of the prediction error of zone temperature is 1.2 \degree C for the proposed \algoname\ method, and 1.1 \degree C for the LD method.
	\fi

    The LD method performs comparably to the proposed method in these tests because the initial guess for the non-convex optimization problem in the LD method was chosen carefully. When initial guesses are not chosen carefully, the proposed method outperforms the LD method. 
    \ifArxivVersion
    Details of the comparison are available in Sec.~\ref{sec:append_cmp_LD}.
    \else
    Details of the comparison are available in~\cite{ZengSimultaneousArXiV:2017}; they are omitted here due to lack of space.
    \fi
    
	\subsection{Algorithm evaluation using building data}\label{Sec:results-bldg-id}
	Evaluation with data from a real building is challenging since there is no ground truth to compare with.
	
	\textbf{\textit{Frequency response}}
	Fig.~\ref{fig:Bode_cmp_bldg} shows the Bode plots of the identified model for the real building. Notice that the Bode plots generated using both simulation data and building data are similar. Since the simulation model's parameters are taken from ~\cite{coffman2018simultaneous}, which were obtained by applying the system identification method proposed in that reference to the  data from the same building (Pugh Hall's auditorium), this similarity provides confidence in the results. Frequency responses of the model identified by the LD method are similar in lower frequencies but less so in higher frequencies.
	\begin{figure}[htb]
		\includegraphics[width=0.9\linewidth]{\figPATH/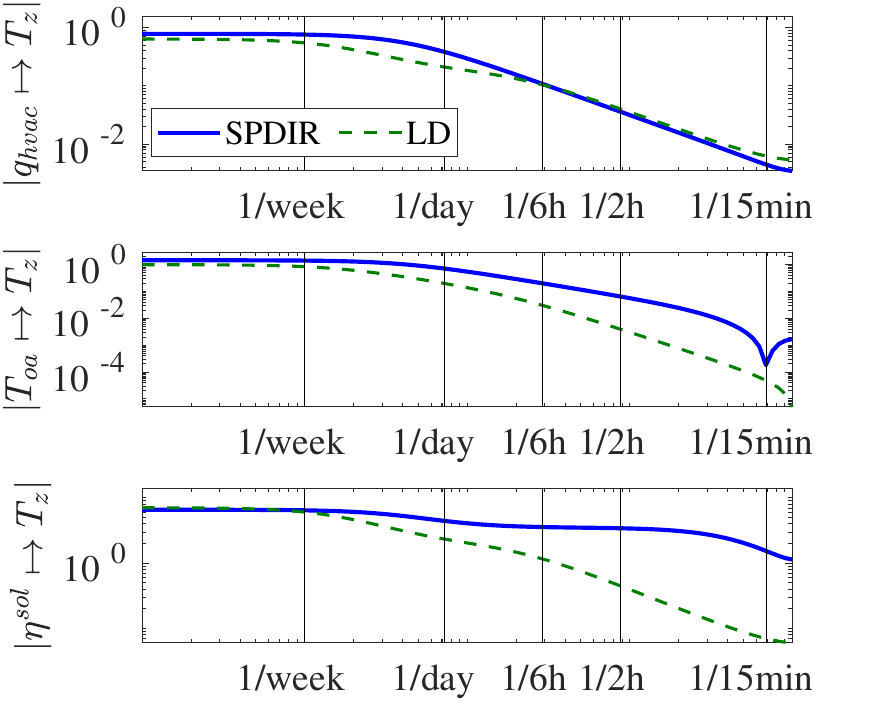}
		\centering
		\caption{Algorithm evaluation on building data: Bode magnitude plots of identified systems.}
		\label{fig:Bode_cmp_bldg}
	\end{figure}
	\begin{figure}[htb]
		\includegraphics[width=0.99\linewidth]{\figPATH/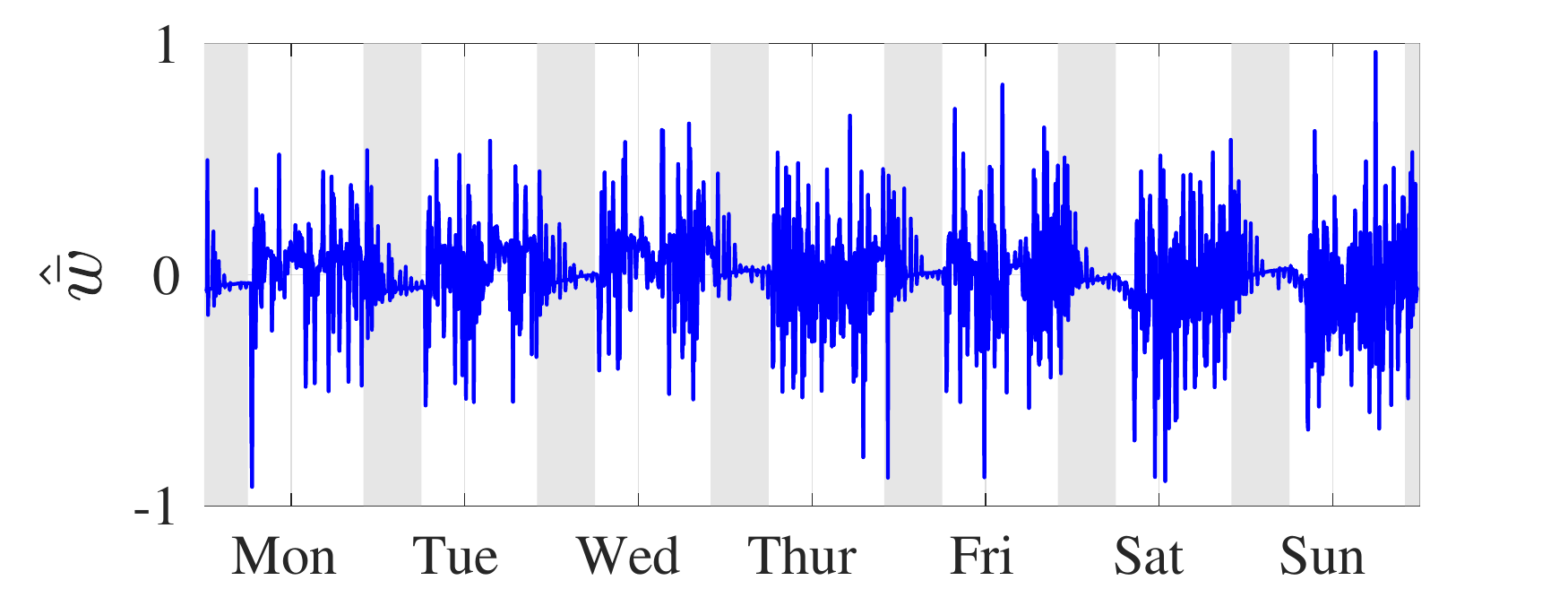}
		\centering
		\caption{Algorithm evaluation on building data: identified transformed disturbance. Night time shaded in gray.}		
		\label{fig:Wbar_bldg}
	\end{figure}
	
    \textbf{\textit{Disturbance}}
    The estimated transformed disturbance $\hat{\bar w}$ is shown in Fig.~\ref{fig:Wbar_bldg}. The entries corresponding to nighttime are small in magnitude. This is consistent with what we expect: this particular building is used mostly as a classroom and is unoccupied at night.  So the disturbance - and the transformed disturbance - should be small at night. The output disturbance estimated by the LD method is not shown since it is not comparable with the transformed input  disturbance identified by the \algoname\ method. 
	
	\textbf{\textit{Zone temperature fitting}}
	The temperature is predicted quite well by the identified plant and disturbance; see Fig.~\ref{fig:Tz_cmp_bldg}. The rms error is 0.3\degree C for the proposed method, and 0.1\degree C for the LD method.
	\begin{figure}[htb]
		\includegraphics[width=0.99\linewidth]{\figPATH/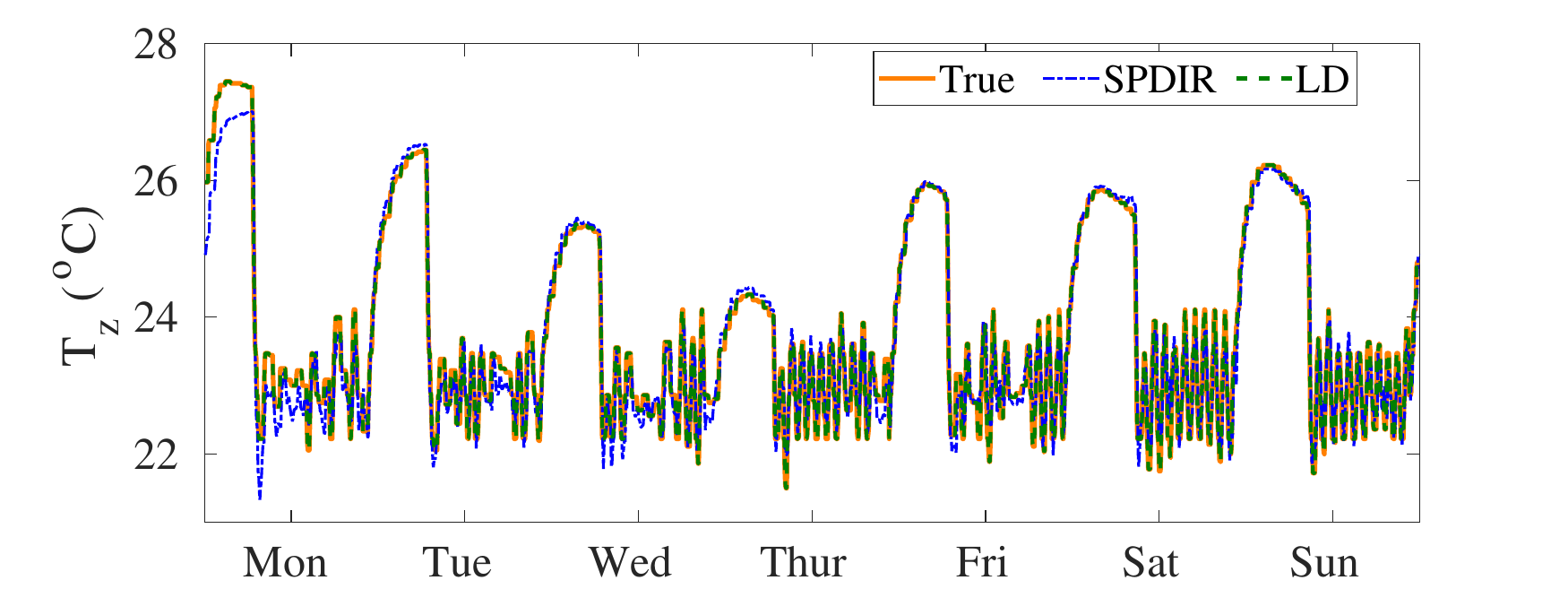}
		\centering
		\caption{Algorithm evaluation on building data: comparison of actual zone temperature and fitted zone temperatures.}
		\label{fig:Tz_cmp_bldg}
	\end{figure}
	
	\ifArxivVersion
	In addition, we also tried application of the Box-Jenkins method with an ARMAX (Autoregressive Moving Average with Explanatory Variable) model to identify an LTI plant driven by colored Gaussian disturbance~\cite{box2015time}. The proposed method outperforms the Box Jenkins method for experiments with both simulation and building data; see Sec.~\ref{sec:append_cmp_BJ} for the details.
	\else
	Though not reported here due to lack of space, we also tried application of the Box-Jenkins method with an ARMAX (Autoregressive Moving Average with Explanatory Variable) model to identify an LTI plant driven by colored Gaussian disturbance~\cite{box2015time}. The proposed method outperforms the Box Jenkins method for experiments with both simulation and building data; see~\cite{ZengSimultaneousArXiV:2017} for the details. %Results show that with carefully picked model order ARIMAX$(p,d,q)$, the Box-Jenkins method is able to obtain similar results compared to the proposed method. However application of the Box Jenkins method into an autonomous control system remains challenging, as it requires human experts to decide the model order based on the data statistics.
	\fi
	
	\section{Conclusion}\label{sec:conclusion}
	The proposed method identifies a black box LTI model and a non-parametric (transformed) disturbance using
        $\ell_1$-regularization. In contrast to existing methods, it can be used as part of a self-learning control system without human supervision due to convexity and guarantees on stability and DC gains. Some work on using the method in this way for control are reported in~\cite{zeng2020autonomous}. There are many avenues for future work, including identification of the input disturbance rather than its transformed version, and analysis of the quality of data needed for the method to perform well. Modeling of multi-zone buildings is another area for extension. Preliminary work in this direction are reported in~\cite{ZengIdentification:TCST:2019}. 
	
	\bibliographystyle{plain}
	%\bibliography{\JBbibPATH/Jonathan,\DiCEbibPATH/Barooah,\DiCEbibPATH/optimization,\DiCEbibPATH/systemid,\DiCEbibPATH/grid,\DiCEbibPATH/building,\TZbibPATH/Tingting,\ACbibPATH/Building_Parameter_Est}

	\ifArxivVersion
	\section{Appendix}\label{sec:appendix}
	\subsection{Constraints derivation} \label{sec:append_const}
	\textbf{\textit{Stability}}
	For discrete-time transfer function~\eqref{eq:dtmodel} to be stable, roots of $D(\zinv) = \zinvv(z^{2}-\theta_1z-\theta_2)$ need to lie within the unit circle, i.e., $|r_1|, |r_2| < 1$, where $r_1, r_2 = \frac{\theta_1\pm\sqrt{\Delta}}{2}$. Resulting discussion goes as follows depending on sign of $\Delta=\theta_1^2+4\theta_2$.
	\begin{enumerate}[\qquad\quad]
		\item [\emph{Case 1:}] $\Delta=\theta_1^2+4\theta_2\geq0$ $\Rightarrow$ $r_1$, $r_2$ $\in\R$.
		\begin{align}\label{eq:delta_sqrt}
		\begin{aligned}
		|r_1|,|r_2|<1
		\iff& 
		\begin{cases}
		-2-\theta_1<\sqrt{\Delta}<2-\theta_1\\
		-2+\theta_1<\sqrt{\Delta}<2+\theta_1
		\end{cases}.
		\end{aligned}
		\end{align}
		Notice that either $\theta_1\geq2$ or $\theta_1\leq-2$ will lead to $0\leq\sqrt{\Delta}<0$, which has no solution, we need
		\begin{align*}
		-2<\theta_1<2.
		\end{align*} 
		Take square of both sides of Eq~\eqref{eq:delta_sqrt} one would get:
		\begin{align*}
		\theta_2 + \theta_1 <1, \theta_2-\theta_1 < 1.
		\end{align*}
		\item [\emph{Case 2:}] $\Delta=\theta_1^2+4\theta_2<0$ $\Rightarrow$ $\bar{r}_1$=$r_2$.
		\begin{align*}
		\begin{aligned}
		|r_1|,|r_2|<1
		\iff&|r_1^2|=|r_1\bar{r}_1|=|r_1r_2|<1\\
		\iff&|-\theta_2|<1\\
		\iff&-1<\theta_2<1.
		\end{aligned}
		\end{align*}
	\end{enumerate}
	Combining the above two cases together, we have stability constraints as
	\begin{align*}
	\begin{aligned}
	-\theta_2& < 1, & \theta_2+\theta_1&< 1, & \theta_2-\theta_1& < 1.
	\end{aligned}
	\end{align*}  
	
	\textbf{\textit{Positive DC-gain}}
	For the discrete-time model~\eqref{eq:dtmodel}, the DC gains from inputs \qhvac, $T_{oa}$, \solirr, to the output $T_z$ are 
	\begin{align*}
	\begin{aligned}
	\frac{K(\zinv)^T}{D(\zinv)}|_{z=1} = 
	\frac{1}{1-\theta_1-\theta_2}\begin{bmatrix}
	\theta_3+\theta_4+\theta_5 \\
	\theta_6+\theta_7+\theta_8 \\
	\theta_9+\theta_{10}+\theta_{11} 
	\end{bmatrix}.
	\end{aligned}
	\end{align*}
	Using the previously established relation~\eqref{eq:stable_const2}:
	\begin{align*}
	\theta_2+\theta_1&< 1,
	\end{align*}
	the denominator $D(\zinv)|_{z=1}$ is positive. Therefore positive DC gains are equivalent to
	\begin{align*}
	\begin{aligned}
	\theta_3+\theta_4+\theta_5&> 0,
	\end{aligned}\\
	\begin{aligned}
	\theta_6+\theta_7+\theta_8&> 0,
	\end{aligned}\\
	\begin{aligned}
	\theta_9+\theta_{10}+\theta_{11}&> 0.
	\end{aligned}
	\end{align*}
	
	\textbf{\textit{Sign of parameters}}	
	If $t_s<min\{l,m,n\}$, where
	\begin{align*}
	l&\triangleq \frac{2C_wR_wR_z}{R_z+R_w},\notag \\ 
	m&\triangleq 2(R_zC_zR_wC_w)^{1/2},\notag\\
	n&\triangleq \frac{2}{3}\min(R_zC_z,R_zC_w,R_wC_w),
	\end{align*}
	the following derivations hold. Notice that in~\eqref{eq:theta_def}, signs of $\theta_i$'s depend only on numerators as they share a common positive denominator $D_0$, whose parameters are positive as shown in~\eqref{eq:def-D(s)}. Since $f_{12}$, $g_{22}>0$, from the last equation in~\eqref{eq:theta_def}, it follows that $\theta_6$, $\theta_7$, $\theta_8>0$. Meanwhile, we know that $t_s<l=-\frac{2}{f_{22}}$ by hypothesis and $f_{22}<0$; it follows that $-2-f_{22}t_s<0$. Given that $g_{11}$, $g_{13}$ are positive, it can be shown that $\theta_3$, $\theta_9<0$ while $\theta_4$, $\theta_5$, $\theta_{10}$, $\theta_{11}>0$.\\ %Hence to summarize, we have $\theta_i>0$ for $i=4,\dots,8,10,11$, and $\theta_3$, $\theta_9<0$.\\
	Similar analysis applies for $\theta_1$, $\theta_2$. Because 
	\begin{align*}
	0\leq t_s<\{2(R_zC_zR_wC_w)^{1/2}\},
	\end{align*} we have
	\begin{align*}
	t_s^2<4R_zC_zR_wC_w = \frac{4}{d_2} \iff 2d_2t_s^2<8.
	\end{align*}
	Thus from~\eqref{eq:theta_def}, $\theta_1>0$.
	Denote
	\begin{align*}
	\begin{aligned}
	b&\triangleq \frac{d_1}{d_2}=R_wC_w+R_wC_z+R_zC_z>0, \\
	c&\triangleq \frac{1}{d_2}=R_zC_zR_wC_w>0. 
	\end{aligned}
	\end{align*}
	From~\eqref{eq:theta_def}, $\theta_2<0$ is equivalent to
	%then it follows  from \eqref{eq:theta_def} to make $\theta_2<0$ as long as
	\begin{align}\label{eq:theta2_parabola}
	\begin{aligned}
	d_2(t_s^2-\frac{2d_1}{d_2}t_s+\frac{4}{d_2})>0 \iff t_s^2-2bt_s+4c>0
	\end{aligned}
	\end{align}
	since $d_2>0$.
	Let $\Delta$ be the discriminant of the above quadratic, then
	\begin{align*}
	\begin{aligned}
	\frac{\Delta}{4} = & b^2-4c\\
	=&(R_wC_w)^2+(R_wC_z)^2+(R_zC_z)^2\\&-2R_wR_zC_wC_z+2R_wR_zC_z^2+2R_w^2C_wC_z\\
	>&(R_wC_w)^2+(R_wC_z)^2+(R_zC_z)^2\\&-2R_wR_zC_wC_z+2R_wR_zC_z^2-2R_w^2C_wC_z\\
	=&(R_wC_w-R_wC_z-R_zC_z)^2\\
	\geq & 0,
	\end{aligned}
	\end{align*}
	which means~\eqref{eq:theta2_parabola} has two distinct real roots
	$r_1,r_2$. Here we will show \eqref{eq:theta2_parabola} holds by showing $t_s$ lies on the left hand side of the smaller root, denoted as $r_1$. We have
	\begin{align*}
	\begin{aligned}
	r_1 = & b-\sqrt{b^2-4c}=b-\sqrt{\bigg(b-\frac{2c}{b}\bigg)^2-\frac{4c^2}{b^2}}\\
	\geq &  b-\sqrt{\bigg(b-\frac{2c}{b}\bigg)^2}  = b-\sqrt{\bigg(\frac{b^2-2c}{b}\bigg)^2}.
	\end{aligned}
	\end{align*}
	Since $c>0$, we have $b^2-2c>b^2-4c>0$, and
	\begin{align*}
	\begin{aligned}
	r_1 \geq & b-\bigg(\frac{b^2-2c}{b}\bigg)=\frac{2c}{b}=  \frac{2R_zC_zR_wC_w}{R_wC_w+R_wC_z+R_zC_z} \\
	\geq &\frac{2R_zC_zR_wC_w}{3\max(R_wC_w,R_wC_z,R_zC_z)}\\
	=&\frac{2}{3}\min(R_zC_z,R_zC_w,R_wC_w)\\
	>& t_s,
	\end{aligned}
	\end{align*}
	where the final inequality is due to the initial hypothesis: $t_s<n$. Hence~\eqref{eq:theta2_parabola} holds, or equivalently, $\theta_2<0$.
	%It shows that in practice, sampling time will always lie in the region such that~\eqref{eq:theta2_parabola} holds, which makes $\theta_2<0$.
	Therefore we have proved constraints on the signs of parameters as stated in~\eqref{eq:thetasigns}.

	\subsection{Comparison with the Lumped disturbance method without parameter tunning}\label{sec:append_cmp_LD}
	Provided below are the results we obtained to show that, the LD (lumped disturbance) method performs less satisfying without carefully tuned parameters. Both the proposed and the LD methods are implemented using the the same simulation dataset from Section~\ref{sec:evaluation}, except that the disturbance is designed to change from the blue signal (test 1) into the orange signal (test 2) as shown below in Figure~\ref{fig:dist_true}. The difference between these two disturbance signals has mean $0$, with $2.5$ kW standard deviation.
	\begin{figure}[h]
		\includegraphics[width=0.99\linewidth]{\figPATH/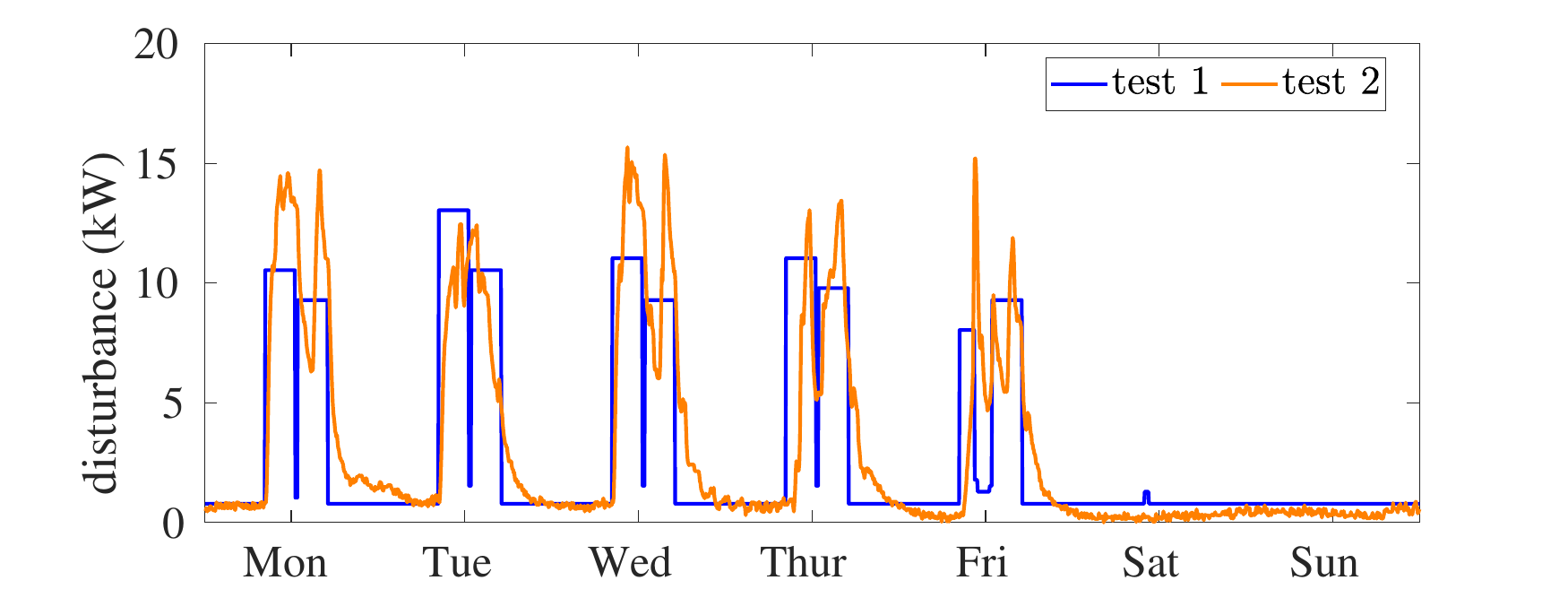}
		\centering
		\caption{Disturbance signals used in the simulation.}
		\label{fig:dist_true}
	\end{figure}
	
	Note for the LD method, an initial guess for the plant parameters need to be provided. These initial parameters are carefully picked in test 1, and are kept as the same in test 2. Results indicate that the zone temperature prediction accuracy are much worse, when the disturbance changes but the initial guesses for the parameters remain the same: the rms value of the prediction error doubles from 1.2\degree C to 2.4\degree C; see Figure~\ref{fig:Tz_cmp_LD}.
	\begin{figure}[h]
		\includegraphics[width=0.99\linewidth]{\figPATH/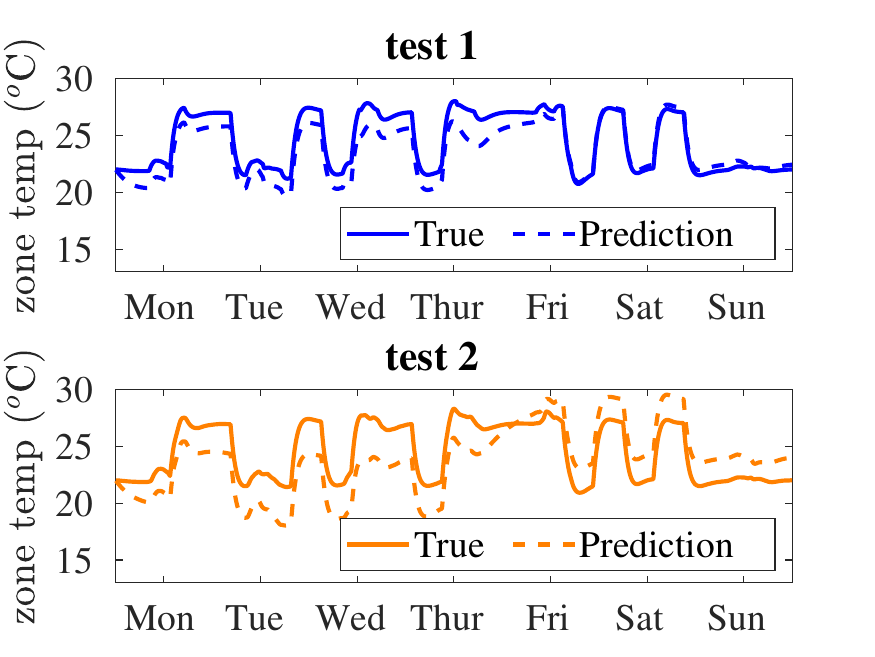}
		\centering
		\caption{Evaluating the LD method with simulation data: predicted and actual zone temperature.}
		\label{fig:Tz_cmp_LD}
	\end{figure}
	
	In contrast, the same experiments are repeated using the proposed method without retuning $\lambda$. The rms value of the prediction error changes slightly from 1.0 \degree C to 1.2\degree C, despite the changes in the disturbance; see Figure~\ref{fig:Tz_cmp_SPDIR}.
	\begin{figure}[h]
		\includegraphics[width=0.99\linewidth]{\figPATH/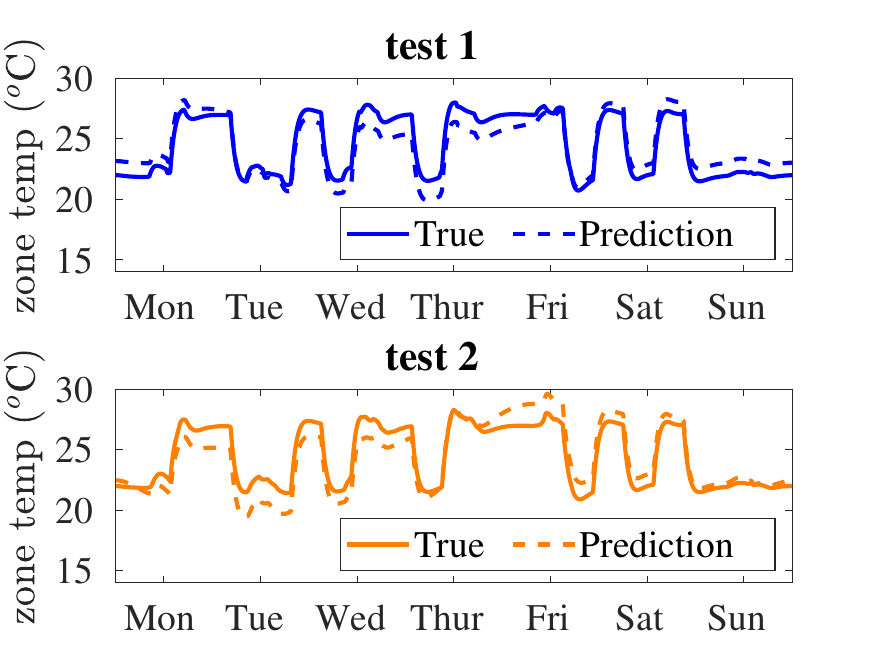}
		\centering
		\caption{Evaluating the SPDIR method with simulation data: predicted and actual zone temperature.}
		\label{fig:Tz_cmp_SPDIR}
	\end{figure}

	\subsection{Comparison with Box Jenkins method}\label{sec:append_cmp_BJ}
	% The ARIMA (Autoregressive Integrated Moving Average) model with order $(p,d,q)$  is given as: 
	% \begin{align}\label{eq:ARIMA_mdl}
	% 	\begin{aligned}
	% y^{(d)}_t =& \phi_1y^{(d)}_{t-1}+\dots+\phi_py^{(d)}_{t-p}+e_t+\theta_1e_{t-1}+\dots+\theta_qe_{t-q},\\  & t=p+1,\dots,N,
	% 	\end{aligned}
	% \end{align}
	% in which $d$ is the number of times $y$ has to be differences to produce the stationary $y^{(d)}$, and $e_t$'s are assumed to follow the normal distribution. The model order $(p,d,q)$ is identified through inspections of the ACF (Auto-Correlation Function) and PACF (Partial Auto-Correlation Function) of the time series $y$. The model parameters are estimated using the nonlinear least-squares.
	The ARMAX model with order $(p,r,q)$ is:
	\begin{align*}%\label{eq:TFM_mdl}
	\begin{aligned}
	y_t = &\phi_1y_{t-1}+\dots+\phi_py_{t-p}+
	\beta^T_0u_t+\dots + \beta^T_ru_{t-r} + \\
	& e_t+\theta_1e_{t-1}+\dots+\theta_qe_{t-q},\\
	& t=p+1,\dots,N,
	\end{aligned}
	\end{align*}
	in which $u\in \R^3$ has 3 input variables. Here $e_t$'s are white Gaussian and are assumed to be statistically independent of the input $u$. The model orders $(p,q)$ are first identified based on observation of the impulse response of the system. Then $r$ is identified based on inspections of the ACF (Auto-Correlation Function) and PACF (Partial Auto-Correlation Function) of the time series $y$. The model parameters are estimated using nonlinear least-squares.
	
	Below we provide comparisons of our SPDIR method with the Box Jenkins method, for experiments with simulation data and real building data respectively. 
	
	\textbf{\textit{Performance of Box Jenkins method applied to the ARMAX model using simulation data}}\\
	The Box Jenkins method is implemented with a ARMAX model using the same simulation data set from our paper. The model order $(p,r,q)$ are identified by using the methods provided in ref [1] (in the revised manuscript) to be $(2,2,2)$. For brevity, following are two excerpts from the results, one for disturbance estimate and the other for zone temperature prediction.
	\begin{itemize}
		\item Figure~\ref{fig:distEsti_cmp} shows the true values of the linear transformation of the disturbance, and the corresponding estimations from the SPDIR method and the BJ method. Results indicate that the proposed method estimates the disturbance more accurately.
		\begin{figure}[h]
			\includegraphics[width=0.99\linewidth]{\figPATH/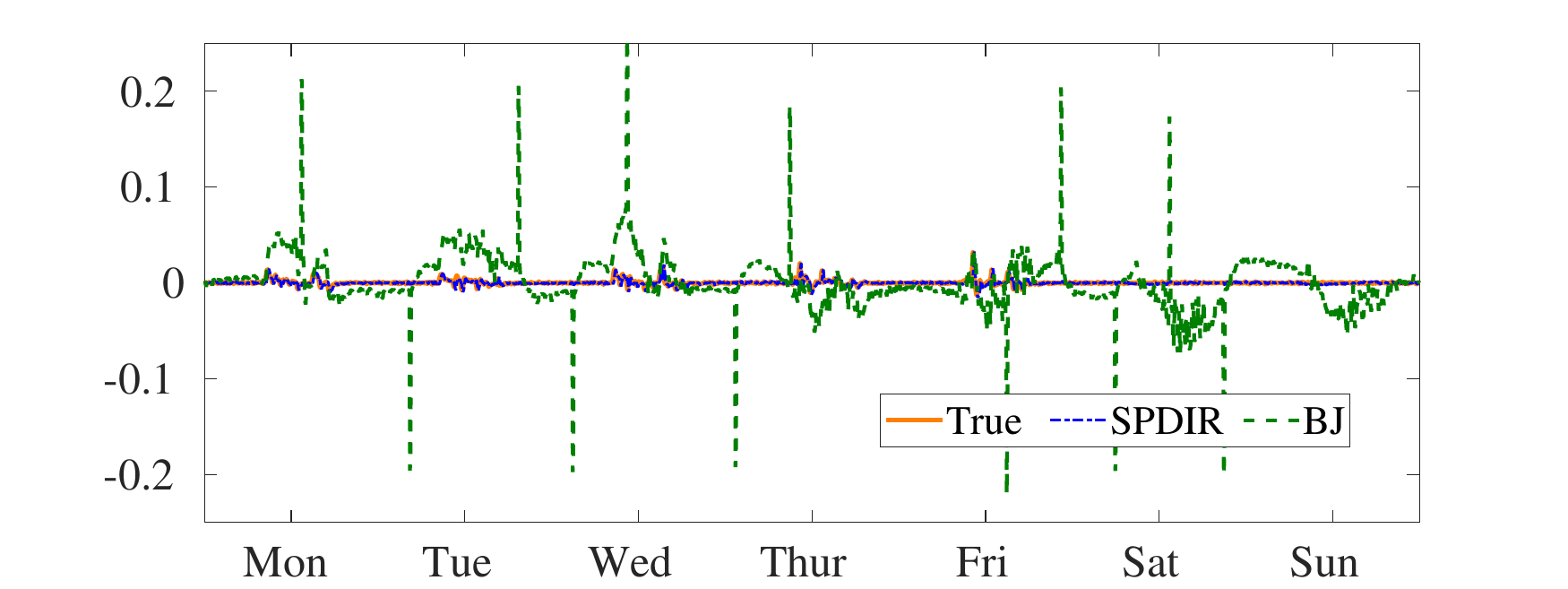}
			\centering
			\caption{Algorithm evaluation on simulation data: estimated and actual lumped disturbance.}
			\label{fig:distEsti_cmp}
		\end{figure}
		\item The plant identified with data from one week is used to predict temperatures in another week. The rms value of the prediction error of zone temperature is 2.3\degree C for the BJ method. The proposed SPDIR method predict the temperature more accurately, with rms error of 1.2\degree C; see Figure~\ref{fig:Tz_cmp}. 
		\begin{figure}[h]
			\includegraphics[width=0.99\linewidth]{\figPATH/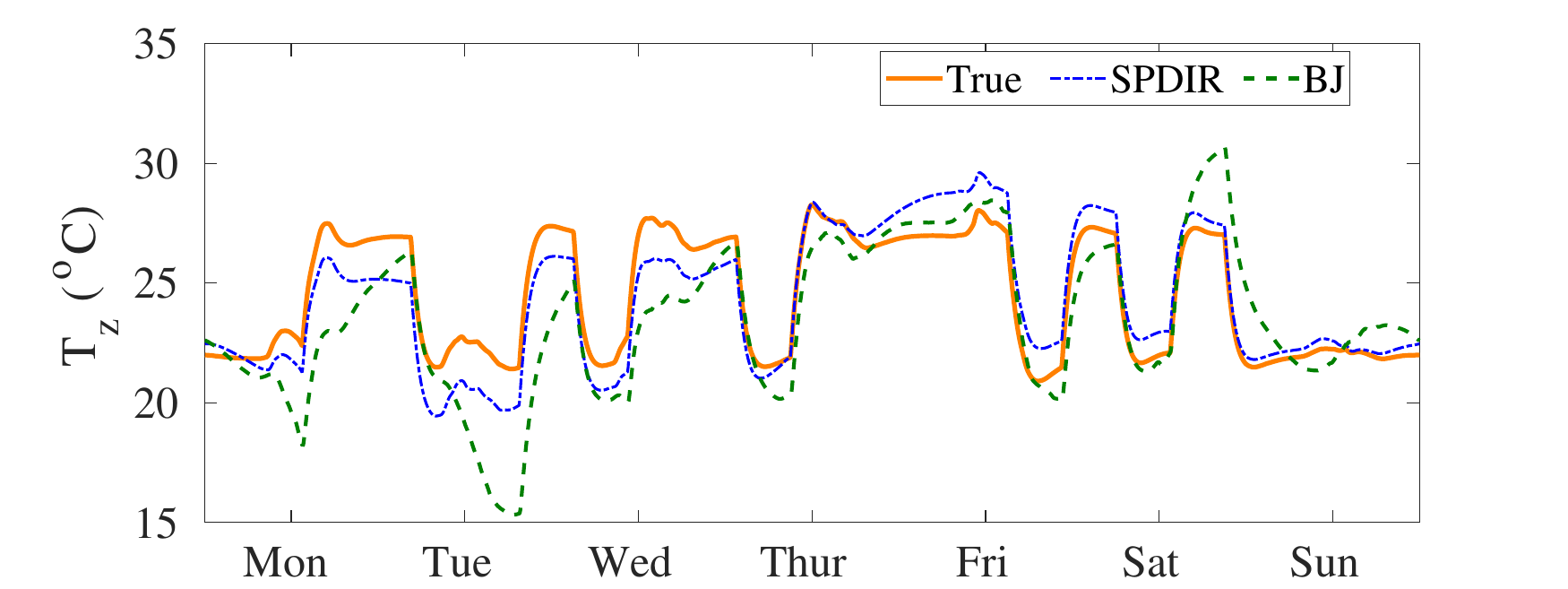}		
			\centering
			\caption{Algorithm evaluation on simulation data: predicted and actual zone temperature (validation dataset).}
			\label{fig:Tz_cmp}
		\end{figure}	
	\end{itemize}
	
	\textbf{\textit{Performance of Box Jenkins method applied to the ARMAX model using building data}}\\
	We now apply the methods to the building dataset from our paper. Since there is no ground truth to compare with, we only provide the zone temperature fitting results below.
	
	The rms value of the prediction error of zone temperature is 0.3\degree C for the proposed method, and is 4.5\degree C for the BJ method; see Figure~\ref{fig:Tr_bldg_prop_BJ}. 
	\begin{figure}[h]
		\includegraphics[width=0.99\linewidth]{\figPATH/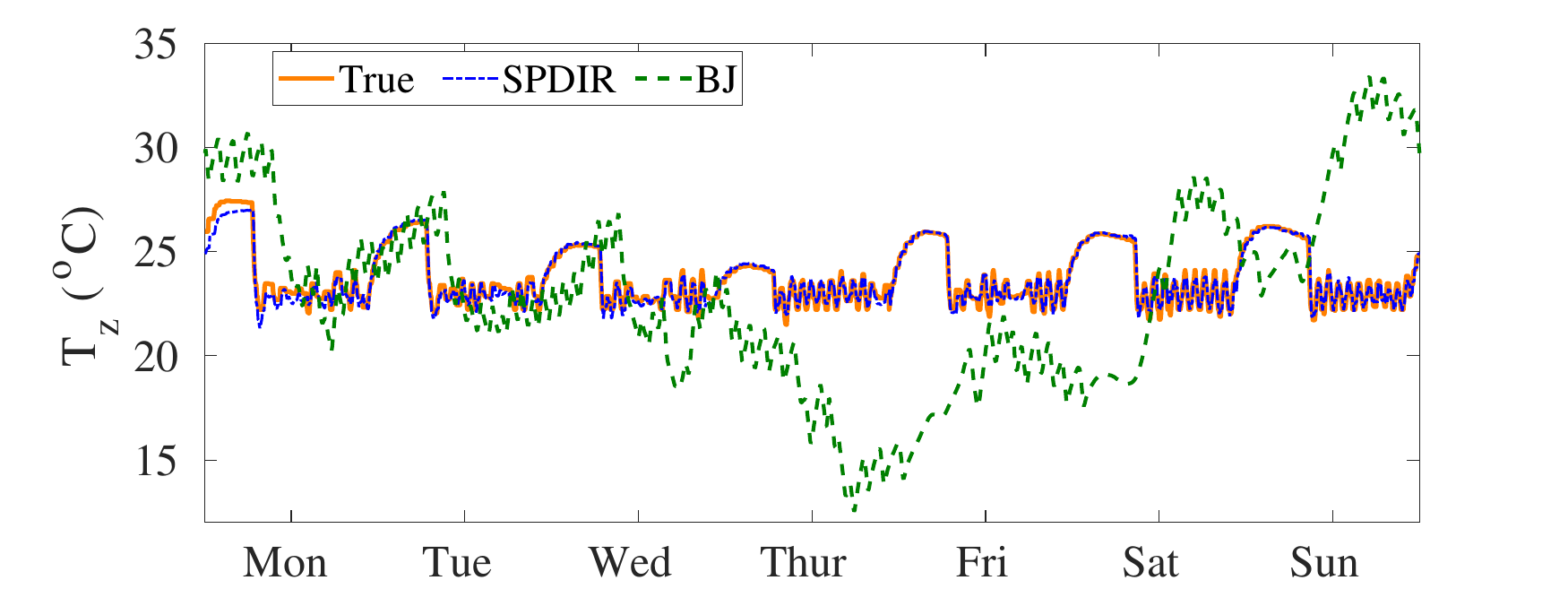}
		\centering
		\caption{Algorithm evaluation on building data: predicted and actual zone temperature.}
		\label{fig:Tr_bldg_prop_BJ}
	\end{figure}

	\fi

\end{document}